\documentclass[12pt,a4paper]{article}

\usepackage{amsfonts, amsmath, amsthm, amssymb}
\usepackage{enumerate}
\usepackage{epsfig} 
\usepackage{graphicx}
\usepackage[T1]{fontenc} 
\usepackage[english]{babel}
\usepackage[applemac]{inputenc}

\usepackage{dsfont}

\usepackage[arrow, matrix, curve]{xy}

\newcommand{\M}{\mathcal P}
\newcommand{\IR}{\mathbb R}
\newcommand{\IP}{\mathbb P}
\newcommand{\IE}{\mathbb E}
\newcommand{\IN}{\mathbb N}

\newcommand{\dd}{\mathrm{d}}
\newcommand{\dbl}[1]{d_{BL}(#1)}

\newcommand{\supp}{\mathrm{supp}\,}

\theoremstyle{plain}
\newtheorem{Theorem}{Theorem}[section]
\newtheorem{Lemma}[Theorem]{Lemma}

\newtheorem{Proposition}[Theorem]{Proposition}
\theoremstyle{definition}
\newtheorem{Definition}[Theorem]{Definition}
\newtheorem{Remark}[Theorem]{Remark}

\newtheorem{Assumption}[Theorem]{Assumption}

\begin{document}
\title{The Vlasov-Poisson dynamics as the mean field limit of extended charges}
\author{Dustin Lazarovici\thanks{lazarovici@math.lmu.de} \\[1.5ex]
Mathematisches Institut, Ludwig-Maximilians Universit\"at\\ 
Theresienstr. 39, 80333 Munich, Germany.}
\maketitle
\abstract{\noindent The paper treats the validity problem of the nonrelativistic Vlasov-Poisson equation in $d\geq 2$ dimensions. It is shown that the Vlasov-Poisson dynamics can be derived as a combined mean field and point-particle limit of an N-particle Coulomb system of extended charges. This requires a sufficiently fast convergence of the initial empirical distributions. If the electron radius decreases slower than $N^{-\frac{1}{d(d+2)}}$, the corresponding initial configurations are typical. This result entails propagation of molecular chaos for the respective dynamics.}

\section{Introduction}
\noindent We are interested in a microscopic derivation of the Vlasov-Poisson dynamics in $d \geq 2$ spatial dimensions. This is the system of equations
\begin{align}\label{VP}
&\partial_t f + p \cdot \nabla_q f + (k* \rho_t) \cdot \nabla_p f = 0, 
\end{align}

\noindent where $k$ is the Coulomb kernel

\begin{equation}\label{Coulombkernel} k(q) := \sigma\frac{q}{\,\lvert q \rvert ^{d}},\hspace{4mm} \sigma=\lbrace \pm 1 \rbrace \end{equation}
and
\begin{equation}  \rho_t(q) =  \rho[f_t](q) = \int  f(t,q,p) \, \mathrm{d}^3p  \end{equation}
\noindent is the charge density induced by the distribution function $f(t,p,q) \geq 0$, describing the density of particles with position $q \in \IR^d$ and momentum $p \in \IR^d$. Here, units are chosen such that all constants, in particular the mass and charge of the particles, are equal to $1$. 

The Vlasov-Poisson equation provides an effective description of a collisionless plasma with electrostatic ($\sigma = +1$) or gravitational  ($\sigma=-1$) interactions. In the gravitational case, the equation is also known as Vlasov-Newton.\\

\noindent \textbf{Derivation of mean field equations.} Kinetic equations of the Vlasov type are usually conceived of as \textit{mean field equations}, effective descriptions of many-particle systems, in which the $N$-particle interactions are approximated by an ``average'' effect, determining an autonomous time-evolution for the distribution function $f$.

Classical results, dealing with simplified models with Lipschitz-continuous forces, prove a statement of the following kind: If for an initial microscopic configuration $Z=(q_i,p_i)_{i=1,..,N}$ the empirical distribution $\mu^N_0[Z] = \frac{1}{N} \sum\limits_{i=1}^N \delta_{q_i}\delta_{p_i}$ approximates a continuous density $f_0$, then, at time $t>0$, the time-evolved distribution $\mu^N_t= \frac{1}{N} \sum\limits_{i=1}^N \delta_{q_i(t)}\delta_{p_i(t)}$ approximates $f_t$, where $f_t$ is a solution of the corresponding Vlasov equation of the form \eqref{VP}. Formally, the approximation is understood in terms of weak convergence of probability measures, quantified by an appropriate metric. (Neunzert and Wick, 1974 \cite{NeunzertWick}, Braun and Hepp, 1977 \cite{BraunHepp}, Dobrushin, 1979 \cite{Dobrushin}; see also \cite{Neunzert, SpohnBook}.)\\

\noindent \textbf{Results for singular forces}.  For singular forces -- up to but not including the Coulomb case --  similar results could be proven only recently. Hauray and Jabin (2013) \cite{HaurayJabin} treat force kernels bounded as $\lvert k(q) \rvert \leq \frac{C}{\lvert q \rvert^\alpha}$ with $\alpha < d-1$. For $1 < \alpha < d-1$ they require an $N$-dependent cut-off which can be chosen as small as $N^{-1/2d}$ for $\alpha \nearrow d-1$, while for $\alpha < 1$ they are able to perform the mean field limit without cut-off. Pickl and Boers (2015) improve the cut-off for singularities near the Coulomb case to $N^{-1/d}$ \cite{Peter}. 

The microscopic justification of the Vlasov-Poisson equation, corresponding to the case $\alpha = d-1$, has been an open problem, so far. In this paper, we propose a particle approximation by \emph{extended charges} with $N$-dependent radius that can decrease as fast as $N^{-\frac{1}{d(d+2)} + \epsilon}$. The proof is based on a stability result of Loeper \cite{Loeper} and an anisotropic variant of the Wasserstein distance. An alternative proof, generalization the methods introduced in \cite{Peter}, is simultaneously presented in \cite{PeterDustin}. 

While the proof in \cite{PeterDustin} requires a much smaller cut-off of order $N^{-\frac{1}{d}+\epsilon}$, the method presented here allows for better rates of convergence. Moreover, the microscopic regularization proposed here can be understood as a nonrelativistic analogue of the rigid charges model that was used by Golse (2012) to perform the mean field limit for a regularized version of the Vlasov-Maxwell dynamics \cite{Golse}. Our discussion might thus also be interesting in view of a possible generalization to the relativistic Vlasov-Maxwell system.\\

\noindent \textbf{Convergence in law and molecular chaos.}
What all recent results with singular forces have in common, is that they are probabilistic in the sense that the mean field limit can be performed for \emph{typical} initial conditions. In other words, the microscopic density $\mu^N_t$ converges in law to the constant variable $f_t$, which is given as the solution of the corresponding mean field equation. By a well-known result in probability theory (e.g. \cite{Kac}, \cite{Grunbaum}, \cite[Prop.2.2 ]{Sznitman}, \cite{Mischler}), this is equivalent to \emph{molecular chaos} in the following sense: 
If at time $t=0$ the particles are identically and independently distributed with law $f_0$ and if the corresponding product measure $F^N_0 = \otimes^N f_0$ on $\IR^{6N}$ evolves with the microscopic $N$-particle flow, then, for times $t>0$, it  holds that $F^N_t = \Phi_t^N \# F^N_0 \approx \otimes^N f_t$, where the approximation is understood in terms of convergence of marginals. That is, writing $x_i=(q_i, p_i)$, we consider the $k$-particle marginal
\begin{equation*}\label{molecularchaos} ^{(k)}{F^N_t(x_1,...,x_k)} := \int F^N_t(Z) \, \mathrm{d}x_{k+1}... \mathrm{d}x_{N}.\end{equation*}
Then $^{(k)}{F^N_t}$ converges weakly to $\otimes^k f_0$ as $N \to \infty$ for all $k \in \IN$. 

\noindent Note that the probabilistic nature of these statements is in contrast to earlier results in \cite{BraunHepp} and \cite{Dobrushin}, which are, in effect, deterministic, allowing arbitrary sequences of initial configurations. One reason is that for singular forces, there exist ``bad'' initial conditions leading to clustering of particles and significant deviations from the typical mean field behavior.\\

\noindent \textbf{Comparison of recent result, open problems} 
The strategy employed in \cite{HaurayJabin}, as well as in the present paper, is thus to impose additional constraints on the initial conditions, subsequently showing that these constraints are satisfied with probability $1$ in the limit $N \to \infty$. In \cite{HaurayJabin}, the respective bounds are imposed on the concentration of particles at $t=0$, while in our proof, the probabilistic element enters through the requirement of a sufficiently fast convergence of the initial microscopic distribution. In any case, these assumptions assure that the initial configuration is ``well-placed'', so to speak, preventing, in particular, a blow-up of the microscopic dynamics. 

One of the key innovations in our proof is that a regularization is applied on the level of the charge density, which allows us to work with bounded densities rather than Dirac masses. The $L^\infty$-norm of the microscopic charge density is controlled by propagating a respective Wasserstein bound (Lemma \ref{Lemma:rhobound}). Similar estimates can be used to carry over stronger regularity properties from the Vlasov density to the microscopic density.  While this method is rather simple, it requires a relatively large cut-off of order $\sim N^{-\frac{1}{d(d+2)} + \epsilon}$ for some $\epsilon > 0$. Moreover, there is no immediate connection between the width of the cut-off and the strength of the singularity. This is in contrast to the situation in \cite{HaurayJabin} or \cite{Peter}, where the lower bound on the required regularization decreases with $\alpha$. 

Most notably, Hauray and Jabin, 2013 \cite{HaurayJabin}, are able to prove molecular chaos for weak singularities ($\alpha < 1$) with no cut-off at all, while the method proposed here requires in any case a regularization (smearing of the charges). However, we emphasize that the results in \cite{HaurayJabin} do not include the Coulomb case $\alpha = d-1$, which is the main focus of our paper. Furthermore, our result applies also in dimension 2, while the assumptions required in \cite{HaurayJabin} are no longer generic in that case.

The method introduced in \cite{Peter} and extended in \cite{PeterDustin} is designed for stochastic initial conditions, thus aiming directly at a typicality result. Rather than controlling the difference between $f^N$ and $\mu^N$ in some weak metric, one considers a stochastic process of the form $\IE(\lvert \Phi^f_{t,0}(Z) - \Phi^\mu_{t,0}(Z) \rvert_\infty)$, where  $ \Phi^f_{t,0}, \Phi^\mu_{t,0} $ are the N-particle flows generated by the mean field dynamics and the microscopic dynamics, respectively. The corresponding proof allows the cut-off to decrease as fast as $N^{-1/d+\epsilon}$, i.e. (almost) as fast as the typical distance between a particle and its nearest neighbour. 

Whether mean field results for strongly singular forces -- approaching or even including the Coulomb case -- can be obtained with no cut-off at all is an open question. Concerning related problems, we believe that some of the methods presented here could be generalized to the relativistic Vlasov-Maxwell dynamics and leave this to be treated in a future paper.

\section{The microscopic model}

As the force kernel considered here is strongly singular at the origin, we will require a regularization on the microscopic level. We shall consider as a microscopic model the dynamics of smeared (extended) charges with Coulomb interactions. The cut-off parameter $r_N$ thus has a straight-forward physical interpretation as a finite electron radius. In the relativistic case, an analogous model of rigid charges (without collisions and rotations) was used by Golse to derive a regularized version of the Vlasov-Maxwell system \cite{Golse} (c.f. also Rein, 2004 \cite{Rein2}).

While the smearing of charges is a natural way to regularize point-interactions, the cut-off thus introduced is a technical necessity rather than a realistic physical model. In particular, the $N$-dependence of the electron radius might seem strange from a physical point of view, though similar regularizations are commonly used in numerical simulations. Also note that in our proof, the radius has to be chosen so large that a great number of particles will typically overlap. Intuitively, the combined limit $N\to \infty, \; r_N\to 0$ describes a regime where a large number of smeared electrons blurs into a continuous charge cloud.\\

\noindent As before, let
\begin{equation*} k: \IR^d \to \IR^d, \, q \mapsto \sigma\, \frac{q}{\lvert q\rvert^d},\end{equation*}
denote the Coulomb kernel. That is, if $\Psi : \IR^d \to \IR$ is a solution of Poisson's equation
\begin{equation*} \Delta \Psi = \mp c \,\rho,\;\; \lim_{\lvert q\rvert\to +\infty} \Psi(q) = 0\end{equation*}
in the sense of 
\begin{equation*}
\Psi(q) = \int \frac{\sigma}{\lvert q-q' \rvert^{d-2}} \; \rho(q') \, \dd^d q', \; d\geq 3,
\end{equation*}
or \begin{equation*} \Psi(q)= -\sigma \int \ln(q-q') \rho(q') \, \dd q',  \, \text{ for } d = 2 \end{equation*} then
\begin{equation*} 
- \nabla \Psi (q) = k*\rho (q) = \sigma \int \frac{q-q'}{\lvert q-q' \rvert^d} \; \rho(q') \, \dd^d q'.
\end{equation*}

\noindent We consider a system of $N$ charges, smeared out by a smooth, non-negative, spherically symmetric form-factor $\chi \in C^\infty_0 (\IR^d)$. For simplicity, we shall assume that $\chi$ satisfies: 

	\begin{enumerate}[i)]
		\item $\mathrm{supp}(\chi) \subseteq \mathrm{B}(1;0) = \lbrace x \in \IR^d : \lVert x \rVert \leq 1 \rbrace$.
		\item $\lVert \chi \rVert_\infty = \sup_{x \in \IR}\;  \chi (x) = 1$. 
		\item $\lVert \chi \rVert_1 = \int \chi(x) \dd x =1$.
	\end{enumerate}

\noindent We call a sequence $(r_N)_{N\in \IN}$ of positive real numbers a \emph{rescaling sequence} if it is monotonously decreasing with $r_1 = 1$ and $\lim\limits_{N \to \infty} r_N = 0$. Given such a rescaling sequence, we define a rescaled form-factor as 
		\begin{equation}\label{rescaled}\chi^{N}(x):= \frac{1}{r_N^{d}}\, \chi\Bigl(\frac{x}{r_N}\Bigr), \; N \in \IN. \end{equation}

\noindent The configuration of the microscopic system is given by $Z(t) = (q_i(t),p_i(t))_{1\leq i \leq N}$, where $q_i(t)$ is the center of mass of particle $i$, and $p_i(t)$ the corresponding momentum at time $t$. The equations of motion in the so called \textit{mean field scaling} read:

\begin{equation} \begin{cases}\label{microscopiceq} \dot{q_i}(t) = p_i(t)\\[1.2ex] \dot{p_i}(t) = K^N(q_i ; q_1,...,q_N)  \end{cases}\end{equation}
\noindent with
\begin{align}\label{RCforce} K^N(q_i ;  q_1,...,q_N) := \frac{1}{N} \sum\limits_{j=1}^N \int \int \chi^N(q_j - y) k(z-y) \chi^N(q_i - z)  \, \dd^d y \, \dd^d z. \end{align}

\noindent The $N$-particle force \eqref{RCforce} can be rewritten in the following way: Given the microscopic density $\mu^N_t = \frac{1}{N} \sum\limits_{i=1}^N \delta_{q_i(t)}\delta_{p_i(t)}$, one checks that
\begin{equation*} K^N(\cdot ; q_1,...,q_N) = \chi^N*k*\chi^N*\rho[\mu^N_t] =: \tilde{k}*\tilde{\rho}[\mu^N_t], \end{equation*}
\noindent where we introduce the notation 
\begin{equation} \tilde{\varphi} := \chi^N*_x\varphi \end{equation}
\noindent for $\varphi$, a measure or measurable function on $\IR^d$ and $*$ denoting the convolution with respect to the space variable. 

Except for the scaling-factor ${N}^{-1}$, these equations describe the regular Coulomb dynamics for smeared charges with form-factor $\chi^N$. They can be understood as the nonrelativistic limit of a Maxwell-Lorentz system of rigid charges (also known as the Abraham model, c.f. \cite[Chs. 2, 13]{Spohn}). The double-convolution results from the fact that the charge enters the interaction-term quadratically; In other words, the charges acting and the charge being acted upon are both smeared out. Note that this system is Hamiltonian for
\begin{equation*} H(q_i,p_i) = \sum\limits_{i=1}^N \frac{1}{2}\, p_i^2 \, + \, \frac{1}{2N} \sum\limits_{i,j} \int \int \chi(y-q_i)\frac{\sigma}{\lvert z-y \rvert^{d-1}} \chi(z-q_j) \dd y\, \dd z, \end{equation*}

\noindent and thus conserves total energy. Note also that this Hamiltonian includes self-interactions.   

\subsection{The regularized Vlasov-Poisson equation} 

For the microscopic model described above, we introduce a corresponding mean field equation: 
\begin{equation}\begin{split}\label{RVP} 
&\partial_t f + p \cdot \nabla_q f + k^N[\rho_t] \cdot \nabla_p f = 0,\\
& k^N[\rho_t](q) := \chi^N*k*\chi^N*\rho_t(q),\\
&\rho_t(q) = \rho[f_t](q) = \int f(t,q,p) \, \dd^d p
\end{split}\end{equation}

\noindent we call this the \textit{regularized Vlasov-Poisson} system with cut-off parameter $r_N$. For $N \to \infty$, the form-factor $\chi^N$ approximates a delta-measure in the sense of distributions and \eqref{RVP} formally reduces to the Vlasov-Poisson system \eqref{VP}.  


\subsection{The method of characteristics}
Let $\nu = (\nu_t)_{t \in [0,T)}$ a continuous family of probability measures on $\IR^d \times \IR^d$ for $T \in \IR^+\cup \lbrace + \infty \rbrace$.  Let $\rho_t[\nu](q)= \int \nu(q,p) \, \dd^d p$ the induced charge distribution on $\IR^d$. We denote by $\varphi^\nu_{t,s} = \bigl(Q^\nu(t,s,q_0,p_0), P^\nu(t,s,q_0,p_0)\bigr) $ the one-particle flow on $\IR^d \times \IR^d$ solving:
\begin{equation}\label{chareq} \begin{cases} 
\frac{\dd}{\dd t} Q = P\\
\frac{\dd}{\dd t} P = \chi^N*k*\chi^N*\rho(Q)\\
Q(s,s,q_0,p_0)= q_0\\
P(s,s,q_0,p_0) = p_0\end{cases}
\end{equation}

\noindent This flow exists and is well-defined since the vector-field is Lipschitz for all $N$. If $f^N(t,q,p)$ is a solution of \eqref{RVP}, it is now straight-forward to check that 
\begin{equation} f^N_t = \varphi^{f^N}_{t,s} \# f^N_s, \;\; \forall 0 \leq s \leq t < T.\end{equation}
\noindent Here,  $\varphi(\cdot) \# f$ denotes the image-measure of $f$ under $\varphi$, defined by  $\varphi \#f(A) = f(\varphi^{-1}(A)) $ for any Borel set $A \subseteq \IR^6$.\\

\noindent Conversely, if $f_t$ is a fixed-point of $(\nu_t) \to \varphi^\nu_t \# f_0$, it is a solution of \eqref{RVP} with initial datum $f_0$. In particular, one observes that $Z(t)=(q_i(t), p_i(t))_{i=1,..,N}$ is a solution of \eqref{microscopiceq} if and only if  $\mu^N[Z(t)]=\frac{1}{N}\sum\limits_{i=1}^N \delta_{q_i(t)}\delta_{p_i(t)}$ solves \eqref{RVP} in the sense of distributions. Basically, our aim is thus to show that this relation carries over to the limit $N \to \infty$.\\

\noindent For the (unregularized) Vlasov-Poisson equation, the corresponding vector-field is not Lipschitz, in general. However, if we assume the existence of a solution $f_t$ with $\rho \in L^\infty([0,T]\times\IR^d)$, the mean field force $k*\rho_t$ does satisfy a Log-Lip bound of the form $\lvert k*\rho_t(x) - k*\rho_t(y)\rvert \leq C \lvert x-y \lvert \lvert \log(\rvert x-y\rvert) \rvert$ (for $\lvert x-y\rvert < \frac{1}{2}$, let's say). This is sufficient to ensure the existence of a characteristic flow $\psi_{t,s}=(Q_{t,s},P_{t,s})$ solving
\begin{equation}\begin{cases}
\frac{\dd}{\dd t} Q_{t,s} = P_{t,s}\\
\frac{\dd}{\dd t} P_{t,s} = k*\rho[f_t](Q_{t,s})\\
Q(s,s,q_0,p_0)= q_0\\
P(s,s,q_0,p_0) = p_0\end{cases}
\end{equation}  
such that $f_t = \psi_{t,s}\# f_s,$ for all $0 \leq s \leq t \leq T$.


\subsection{Existence of Solutions}

\noindent For the regularized mean field equations \eqref{RVP}, all forces are Lipschitz continuous and the solution theory is fairly standard, see e.g. \cite{Dobrushin, BraunHepp}. For the actual Vlasov-Poisson system, the issue is more subtle. Fortunately, in the physically most relevant, 3-dimensional case, we can rely on various results, establishing global existence and uniqueness of (classical) solutions under reasonable conditions on the initial $f_0$ (Pfaffelmoser, 1990 \cite{Pfaffelmoser}, Schaeffer, 1991 \cite{Schaeffer}, Lions and Perthame, 1991 \cite{LionsPerthame}, Horst, 1993 \cite{Horst}). The situation is similar in the 2-dimensional case, treated in Ukai and Okabe, 1978, \cite{Ukai} and Wollmann, 1980 \cite{Wollmann}.\\

\noindent For the rest of the paper, we shall work under the following assumption:

\begin{Assumption}
For $f_0 \in L^1\cap L^\infty(\IR^d \times \IR^d; \IR^+_0)$ there exists a $T^*> 0$ such that the Vlasov-Poisson system (1-3) has a unique solution $f_t$ on $[0, T^*)$ with $f(0, \cdot, \cdot) = f_0$. Moreover, as we consider the sequence of solutions to the regularized equations \eqref{RVP}, the charge density remains bounded uniformly in $N$ and $t$, i.e. $\exists C_0 < + \infty$ such that
\begin{equation}\label{GeneralAssumption}\lVert \rho[f^N_t] \rVert_\infty \leq  C_0, \; \forall t <T^*\; \forall N\in \IN\cup\lbrace +\infty \rbrace, \end{equation}
\noindent where, with a slight abuse of notation, $f^\infty_t :=f_t$. 
\end{Assumption}

\noindent In fact, given a bounded charge density, uniqueness of the solution (in the set of bounded, positive measures) is proven in Loeper, 2006 \cite{Loeper}. Moreover, it is well known that as long as the charge density is bounded, solutions with smooth initial data remain smooth (see e.g. in \cite{Horst3}). 

In dimension $d=3$, the existence result of Lions and Perthame \cite{LionsPerthame} ensures that the above assumption is satisfied for a relatively large class of initial data and $T^*=+\infty$. 

\begin{Theorem}[Lions and Perthame]\label{Thm:LP}\mbox{}\\
	Let $f_0 \geq 0, f_0 \in  L^1(\IR^3\times \IR^3) \cap L^\infty(\IR^3 \times \IR^3)$ satisfy
	
	\begin{equation} \int \lvert p \rvert^m f_0(q, p)  \,\dd q\,  \dd p < + \infty\end{equation}
	for all $m<m_0$ and some $m_0 >3$. 
	\begin{enumerate}[1)]
		\item Then, the Vlasov-Poisson system defined by equations (1--3) has a continuous, bounded solution $f(t, \cdot,\cdot) \in C(\IR^+;L^r(\IR^3\times\IR^3) \cap L^\infty(\IR^+;L^\infty(\IR^3 \times \IR^3))$,  $1 \leq r < \infty$, satisfying
		\begin{equation} \sup\limits_{t \in [0,T]} \int \lvert p \rvert^m  f(t,q, p)  \,\dd p \, \dd p < +\infty, \end{equation}
		for all $T < \infty, m < m_0$.\\
		
		\item If, in fact, $m_0 > 6$ and we assume that $f_0$ satisfies 
		\begin{equation}\label{Assumption} \begin{split}
		\mathrm{supess} \lbrace f_0(q'+p t, p' ) : \lvert q-q'\rvert \leq Rt^2, \lvert p-p' \rvert < Rt \rbrace\\ 
		\in L^\infty\bigl((0,T)\times \IR^d_q; L^1(\IR^3_p)\bigr) \end{split} \end{equation}
		for all $ R >0$ and $T>0$, then there exists $C>0$ such that
		\begin{equation}\label{rhobound}\lVert \rho[f^N_t] \rVert_\infty< C, \; \forall t > 0\; \forall N\in \IN\cup\lbrace \infty \rbrace. \end{equation}
	\end{enumerate}
\end{Theorem}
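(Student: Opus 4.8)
The statement of part 1) is verbatim the global existence theorem of Lions and Perthame \cite{LionsPerthame}, so I would simply invoke it; but since part 2) rests on the same machinery, let me recall its skeleton. Setting $\mathcal E_m(t):=\int\int\lvert p\rvert^m f_t(q,p)\,\dd q\,\dd p$, one differentiates along \eqref{VP}, integrates by parts in $p$ using $\dot p=k*\rho$, and invokes the interpolation inequality $\lVert\rho_t\rVert_{L^{(3+m)/3}}\le C\lVert f_t\rVert_\infty^{m/(3+m)}\mathcal E_m(t)^{3/(3+m)}$ together with the local integrability and the decay of the Coulomb kernel. For $m>3$ this produces a closed differential inequality $\tfrac{\dd}{\dd t}\mathcal E_m\le C(1+\mathcal E_m)^{\gamma}$ with $\gamma<1$, whence $\mathcal E_m$ — and with it all $L^r$-norms of $\rho_t$ — stays finite on every $[0,T]$; the solution is then built by approximation and weak compactness, and uniqueness in the stated class is the Log-Lip/Loeper argument already quoted. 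Since $\lVert\chi^N\rVert_1=1$ gives $\lVert\chi^N*\chi^N*\rho\rVert_{L^r}\le\lVert\rho\rVert_{L^r}$ for every $r\in[1,\infty]$, the identical computation run on \eqref{RVP} yields bounds on $\mathcal E_m^N(t):=\int\int\lvert p\rvert^m f^N_t(q,p)\,\dd q\,\dd p$ that are uniform in $N$, which is what feeds into part 2).

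For part 2), the case $N=\infty$ is precisely the bounded-density statement of \cite{LionsPerthame}; its proof uses the representation along characteristics. Since the one-particle flow of \eqref{chareq} is measure preserving, $f^N_t=\varphi^{f^N}_{t,0}\# f_0$ gives $f^N_t(q,p)=f_0\bigl(\varphi^{f^N}_{0,t}(q,p)\bigr)$, hence
\begin{equation*} \rho[f^N_t](q)=\int f_0\bigl(Q^N(0,t,q,p),\,P^N(0,t,q,p)\bigr)\,\dd^d p. \end{equation*}
Integrating \eqref{chareq} backwards from time $t$, the trajectory through $(q,p)$ stays on the free-streaming ray up to errors governed by the accumulated force: with $R\ge\sup_{s\le t}\lVert k^N*\rho[f^N_s]\rVert_\infty$,
\begin{equation*} \bigl\lvert Q^N(0,t,q,p)-\bigl(q-t\,P^N(0,t,q,p)\bigr)\bigr\rvert\le\tfrac{t^2}{2}R,\qquad\bigl\lvert P^N(0,t,q,p)-p\bigr\rvert\le tR. \end{equation*}
Bounding the integrand by the essential supremum of $f_0$ over the corresponding $O(t^2R)\times O(tR)$ neighbourhood of that ray, one reaches an estimate of the shape
\begin{equation*} \rho[f^N_t](q)\ \le\ \int\mathrm{supess}\bigl\{\,f_0(q'-p\,t,\,p')\ :\ \lvert q'-q\rvert\le Ct^2R,\ \lvert p'-p\rvert\le CtR\,\bigr\}\,\dd^d p, \end{equation*}
that is, essentially the density of a force-perturbed free transport of $f_0$, which is exactly the quantity kept finite — with a bound depending only on $f_0$, $R$ and $T$, and not on $N$ — by the integrability condition \eqref{Assumption} (the moment bound of part 1) serving to handle the large-$\lvert p\rvert$ tail). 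One then closes the loop: the force is bounded sublinearly in the density, $\lVert k^N*\rho[f^N_s]\rVert_\infty\le C\lVert f_0\rVert_1^{1/d}\lVert\rho[f^N_s]\rVert_\infty^{(d-1)/d}$; since the naive combination of the last two displays yields only a differential inequality that blows up in finite time, the actual closing requires the refined, Pfaffelmoser-type splitting of the force contributions carried out in \cite{LionsPerthame}.

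To pass to all finite $N$ it suffices that the regularization be harmless for every estimate above. Because $\chi$ is spherically symmetric and the Newtonian potential is harmonic away from the origin, $k^N=\chi^N*k*\chi^N$ coincides with $k$ outside $B(0;2r_N)$, while inside it $\lvert k^N(x)\rvert\le C\,r_N^{-(d-1)}\le C'\lvert k(x)\rvert$; hence $\lvert k^N\rvert\le C\lvert k\rvert$ pointwise with $C$ independent of $N$. Therefore every bound on the Coulomb force used in \cite{LionsPerthame} — which enters only through the pointwise size $\lvert k(x)\rvert\le\lvert x\rvert^{-(d-1)}$ and the decay of $k$ — holds verbatim for $k^N$ with the same constant; moreover $\lVert\rho[f^N_t]\rVert_1=\lVert f_0\rVert_1$, the moments $\mathcal E_m^N(t)$ are $N$-uniform by part 1), and the hypotheses on $f_0$ do not involve $N$. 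Running the Lions--Perthame argument with these inputs then delivers \eqref{rhobound} with a constant independent of $N$.

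The step I expect to be the genuine obstacle is therefore not the $N$-uniformity — which is settled by $\lvert k^N\rvert\le C\lvert k\rvert$ together with the $N$-independence of the conserved quantities and of the hypotheses on $f_0$ — but the closing estimate itself: reproducing (or legitimately invoking) the delicate a priori bound of \cite{LionsPerthame} that keeps the charge density, equivalently the velocity support, from blowing up in finite time, and for which the integrability condition \eqref{Assumption} and the moment hypothesis $m_0>6$ are indispensable.
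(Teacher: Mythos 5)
Your proposal is correct and follows the same route as the paper, which does not prove this theorem but imports it from Lions--Perthame, merely remarking that the uniform bound on $\rho[f^N_t]$ for the regularized evolutions is ``straightforward to check''. Your pointwise domination $\lvert k^N\rvert=\lvert\chi^N*k*\chi^N\rvert\leq C\lvert k\rvert$ (via Newton's theorem outside $\mathrm{B}(0;2r_N)$ and the crude bound $\lVert\chi^N*\chi^N\rVert_\infty\lesssim r_N^{-d}$ inside), combined with the $N$-independence of the conserved quantities, is precisely the check the paper leaves implicit, and you correctly defer the genuinely hard closing estimate to the cited reference rather than claiming to reproduce it.
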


\noindent Note that Lions and Perthame state \eqref{rhobound} only for $f_t$, though they remark (and it is straightforward to check) that the proof actually yields an upper bound on the charge densities $\rho[f^N_t]$ as one considers a sequence of regularized time-evolutions as, for instance, in \eqref{RVP}.\\

\noindent In higher dimensions, where blow-up might occur, there exists at least some $T^*>0$, depending only on $f_0$, such that \eqref{Assumption} is satisfied, if one assumes that $f_0$ has compact support. This is ensured by the following lemma. 

\begin{Lemma}[Local existence of solutions]
	Let $f_0 \in L^1\cap L^\infty (\IR^{d}\times \IR^3)$ with compact support and $f$ a (local) solution of \eqref{VP} with $f\lvert _{t=0} = f_0$.  Let 
	\begin{align}D(t):= \sup \, \lbrace \lvert q \rvert : \exists  p \in \IR^d: f(t,q,p) \neq 0 \rbrace\\
	R(t):= \sup \, \lbrace \lvert p \rvert : \exists  q \in \IR^d: f(t,q,p) \neq 0 \rbrace
	\end{align}  the diameter of the support in the $q$-, respectively $p$-coordinates. Then there exists a constant $C > 0$ such that
	\begin{align} D(t) &\leq D(0) +  \int\limits_{0}^t R(s) \, \dd s\\
	R(t) &\leq R(0) + C \,  \lVert f_0 \rVert_\infty \lVert f_0 \rVert_1^{1/d} \int\limits_{0}^t R^{d-1}(s) \, \dd s. \end{align}
	These estimates hold independent of $N$ as we consider the sequence $f^N$ of solutions to the regularized equation \eqref{RVP} with $f^N\lvert _{t=0} = f_0$.
\end{Lemma}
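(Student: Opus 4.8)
The plan is to read off both estimates from the method of characteristics. Since $f_t=\psi_{t,0}\#f_0$ (and $f^N_t=\varphi^{f^N}_{t,0}\#f_0$ in the regularized case), the support of $f^N_t$ is the image of $\supp f_0$ under the characteristic flow, so $D(t)$ and $R(t)$ are the suprema of $|Q^N(t,0,q_0,p_0)|$ and $|P^N(t,0,q_0,p_0)|$ over $(q_0,p_0)\in\supp f_0$ (finite by the compact support hypothesis), and it suffices to control a single trajectory $s\mapsto(Q(s),P(s))$. From $\dot Q=P$ one gets $|Q(t)|\le|q_0|+\int_0^t|P(s)|\,\dd s\le D(0)+\int_0^tR(s)\,\dd s$, and taking the supremum over initial data yields the first inequality; this holds verbatim for the regularized dynamics since $\dot Q=P$ there as well.

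For the momenta, $\dot P=k*\rho_s(Q)$ (resp. $k^N[\rho_s](Q)$), hence $|P(t)|\le R(0)+\int_0^t\lVert k*\rho_s\rVert_\infty\,\dd s$, and everything reduces to bounding the field in terms of $R(s)$, uniformly in $N$. First I would record the standard potential estimate: for nonnegative $g\in L^1\cap L^\infty(\IR^d)$ and any $\lambda>0$,
\[ |k*g(q)|\;\le\;\int\frac{g(q-z)}{|z|^{d-1}}\,\dd z\;\le\;\omega_d\,\lVert g\rVert_\infty\,\lambda\;+\;\lambda^{1-d}\lVert g\rVert_1,\qquad \omega_d:=|\partial\mathrm{B}(1;0)|, \]
obtained by splitting the integral at $|z|=\lambda$ and using $\int_{|z|\le\lambda}|z|^{1-d}\,\dd z=\omega_d\lambda$; optimizing over $\lambda$ gives $\lVert k*g\rVert_\infty\le C\,\lVert g\rVert_\infty^{(d-1)/d}\lVert g\rVert_1^{1/d}$ with $C=C(d)$. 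Because $k^N[\rho_s]=k*\chi^N*\chi^N*\rho_s$ and $\lVert\chi^N\rVert_1=1$, we have $\lVert\chi^N*\chi^N*\rho_s\rVert_1\le\lVert\rho_s\rVert_1$ and $\lVert\chi^N*\chi^N*\rho_s\rVert_\infty\le\lVert\rho_s\rVert_\infty$, so the same estimate applies to $k^N[\rho_s]$ with a constant independent of $N$.

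It then remains to bound $\lVert\rho_s\rVert_1$ and $\lVert\rho_s\rVert_\infty$. The characteristic vector fields are divergence-free in $(q,p)$, so the flow preserves phase-space volume; hence $\lVert f^N_s\rVert_1=\lVert f_0\rVert_1$ and $\lVert f^N_s\rVert_\infty=\lVert f_0\rVert_\infty$. Moreover $f^N_s(q,p)=0$ for $|p|>R(s)$ by definition of $R(s)$, so $\lVert\rho_s\rVert_1=\lVert f_0\rVert_1$ and $\rho_s(q)=\int_{|p|\le R(s)}f^N_s(q,p)\,\dd p\le c_d\,\lVert f_0\rVert_\infty R(s)^d$ with $c_d=|\mathrm B(1;0)|$. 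Substituting into the potential estimate yields $\lVert k^N[\rho_s]\rVert_\infty\le C\,\lVert f_0\rVert_\infty\lVert f_0\rVert_1^{1/d}R(s)^{d-1}$ with $C=C(d)$, and integrating in $s$ gives the second inequality; all constants depend only on $d$ and on the fixed normalization of $\chi$, so the bound is uniform in $N$.

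I do not expect a genuine obstacle — this is in essence Batt's classical a priori support estimate for Vlasov-Poisson — but two points require care. The argument is a priori in nature: it presupposes a local solution, and for the true system the existence of the characteristic flow itself rests on the Log-Lip bound, which in turn needs $\rho_s\in L^\infty$; so the statement should be read as a closed estimate valid on any interval of existence, to be combined with a continuation argument elsewhere. And one must verify that the smearing does not spoil $N$-uniformity: this is precisely why the normalization $\lVert\chi\rVert_1=1$ is imposed, and why the additional $O(r_N)$ spreading of the source $\chi^N*\chi^N*\rho_s$ is harmless, since it enters neither the $L^1$/$L^\infty$ norms used above nor the support of $f^N_s$ itself, only that of the field.
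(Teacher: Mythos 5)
The paper does not supply a proof for this lemma -- it is the classical a priori support estimate of Batt -- so you are on your own here, and your argument is essentially the standard one: transport the support along characteristics, bound the field by the split-at-$\lambda$ potential estimate, use conservation of $L^1$ and $L^\infty$ norms of $f$ under the divergence-free characteristic flow, and bound $\lVert\rho_s\rVert_\infty$ by $c_d\lVert f_0\rVert_\infty R(s)^d$ via the momentum support. The handling of $N$-uniformity through $\lVert\chi^N\rVert_1=1$ and Young's inequality is correct and is exactly the point the lemma's last sentence is asking you to verify.

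One small inconsistency you should fix: your optimized potential estimate reads $\lVert k*g\rVert_\infty\le C\,\lVert g\rVert_\infty^{(d-1)/d}\lVert g\rVert_1^{1/d}$, and substituting $\lVert\rho_s\rVert_\infty\le c_d\lVert f_0\rVert_\infty R(s)^d$, $\lVert\rho_s\rVert_1=\lVert f_0\rVert_1$ gives
\begin{equation*}
\lVert k^N[\rho_s]\rVert_\infty\;\le\;C(d)\,\lVert f_0\rVert_\infty^{(d-1)/d}\,\lVert f_0\rVert_1^{1/d}\,R(s)^{d-1},
\end{equation*}
with exponent $(d-1)/d$ on $\lVert f_0\rVert_\infty$, not $1$ as you wrote in your final line (and as the Lemma statement itself asserts). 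This does not affect the use of the lemma -- the point is the closed Gronwall inequality $\dot R\lesssim R^{d-1}$ giving a positive local existence time -- but as written your conclusion does not follow from your own displayed estimate, and you should either keep the honest exponent $(d-1)/d$ or note explicitly that for the purposes of local existence the precise power of $\lVert f_0\rVert_\infty$ in the constant is immaterial.
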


\section{Statement of the results}
 Our approximation result for the Vlasov-Poisson dynamics is formulated in terms of (modified) Wasserstein distances. In the context of kinetic equations, the Wasserstein distance was introduced by Dobrushin \cite{Dobrushin}. Here, we shall briefly recall the definition and some basic properties. For further details, we refer the reader to the  book of Villani \cite[Ch. 6]{Villani}.

\begin{Definition}
 \noindent Let $\M(\IR^k)$ the set of probability measures on $\IR^k$ (equipped with its Borel algebra). For given $\mu, \nu \in \M(\IR^k)$ let $\Pi(\mu, \nu)$ be the set of all probability measures $\IR^k \times \IR^k$ with marginal $\mu$ and $\nu$ respectively. 

 \noindent For $p\in [1,\infty)$  we define the \emph{Wasserstein distance} of order $p$ by
 
 \begin{equation}
 W_p(\mu, \nu) := \inf\limits_{\pi \in \Pi(\mu,\nu)} \, \Bigl( \int\limits_{\IR^k\times\IR^k} \lvert x -  y \rvert^p \, \dd \pi(x,y) \, \Bigr)^{1/p}.   
 \end{equation}
 
\noindent Convergence in Wasserstein distance implies, in particular, weak convergence in $\M(\IR^k)$, i.e. 
 	 \begin{equation*} \int \Phi(x)\, \dd \mu_n(x) \to \int \Phi(x)\, \dd \mu(x), \;\;\; n \to \infty, \end{equation*}
 	 for all bonded, continuous functions $\Phi$. Moreover, convergence in $W_p$ implies convergence of the first $p$ moments. $W_p$ satisfies all properties of a metric on $\M(\IR^k)$, except that it may take the value $+\infty$. 
 \end{Definition}	 
 	 
 \noindent An important result is the \emph{Kantorovich-Rubinstein duality}:
 \begin{equation}\begin{split}\label{Kantorovich}W^p_p(\mu, \nu) =  \sup \Bigl\lbrace &\int \Phi_1(x) \, \dd\mu(x) - \int \Phi_2(y) \, \dd\nu(y) : \\ 
 &(\Phi_1, \Phi_2) \in L^1(\mu)\times L^1(\nu), \Phi_1(y) - \Phi_2(x) \leq \lvert x - y \rvert^p \Bigr \rbrace. \end{split} \end{equation}
 
 \noindent A particularly useful case is the first Wasserstein distance, for which the problem reduces further to
  \begin{equation*} W_1(\mu, \nu) = \sup\limits_{\lVert \Phi \rVert_{Lip} \leq 1}\Bigl\lbrace \int \Phi(x) \, \dd \mu(x) -  \int \Phi(x)\, \dd \nu(x) \Bigr\rbrace, \end{equation*}
  where $\lVert \Phi \rVert_{Lip}:= \sup\limits_{x\neq y} \frac{\Phi(x)-\Phi(y)}{\lvert x - y \rvert}$, to be compared with the \emph{bounded Lipschitz distance}
  \begin{equation*} \dbl{\mu, \nu} = \sup \Bigl\lbrace \int \Phi(x) \, \dd \mu(x) -  \int \Phi(x)\, \dd \nu(x)\, ;\; \lVert \Phi \rVert_{Lip}, \lVert \Phi \rVert_\infty \leq 1\Bigr\rbrace.\end{equation*}

\noindent We can now state our precise results in the following theorems. 

	 
\begin{Proposition}[Deterministic Result] \label{Prop:Prop}
	Let $f_0 \in L^1\cap L^\infty(\IR^d\times \IR^d), \, f \geq 0$. Let $(r_N)_{N\in \IN}$ be a rescaling sequence and $f^N_t$ the unique solution of the regularized Vlasov-Poisson equation \eqref{RVP} with $f^N(0, \cdot,\cdot ) = f_0$. Assume that on $[0,T]$ the sequence $(f_N)_N$ satisfies the uniform bound \eqref{GeneralAssumption} on the induced charge densities. Suppose we have a sequence of initial conditions $Z \in \IR^{6N}$ such that 
	\begin{equation}\label{sufficientlyfast} \lim\limits_{N \to \infty} r_N^{-(1+\frac{d}{2}+\epsilon)} W_2(\mu^N_0[Z], f_0) = 0 \end{equation}
	
	\noindent for some $\epsilon >0$. Then it holds that
	\begin{equation}
	\lim\limits_{N \to \infty} r_N^{-(1+\frac{d}{2})} W_2(\mu^N_t[Z], f^N_t) = 0, \; \forall 0 \leq t \leq T.
	\end{equation}
\end{Proposition}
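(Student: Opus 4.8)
Both $\mu^N_t$ and $f^N_t$ solve the regularized equation \eqref{RVP}, so $\mu^N_t=\varphi^{\mu^N}_{t,0}\#\mu^N_0$ and $f^N_t=\varphi^{f^N}_{t,0}\#f_0$ with the characteristic flows of \eqref{chareq}. The plan is a Gronwall estimate along these flows, tracking all constants in their dependence on $r_N$. Pick a $W_2$-optimal coupling $\pi_0$ of $\mu^N_0$ and $f_0$ and push it forward by $(\varphi^{\mu^N}_{t,0},\varphi^{f^N}_{t,0})$; writing $(Q,P):=\varphi^{\mu^N}_{t,0}(z)$, $(\bar Q,\bar P):=\varphi^{f^N}_{t,0}(z)$ and $A(t):=\int|Q-\bar Q|^2\,\dd\pi_0(z)$, $B(t):=\int|P-\bar P|^2\,\dd\pi_0(z)$, one has $W_2^2(\mu^N_t,f^N_t)\le A(t)+B(t)$ with $A(0)+B(0)=W_2^2(\mu^N_0,f_0)$. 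The decisive structural point is that the force in \eqref{RVP} is $k^N[\rho]=k*\big(\chi^N*\chi^N*\rho\big)$, the Coulomb field of the \emph{bounded}, unit-mass density $h^N[\rho]:=\chi^N*\chi^N*\rho$; hence the log-Lipschitz modulus of $x\mapsto k^N[\rho](x)$ and the Loeper constant used below depend only on $\|h^N[\rho]\|_\infty\le\|\tilde\rho\|_\infty$, and \emph{not} on $r_N$. By the Assumption $\|\tilde\rho[f^N_t]\|_\infty\le\|\rho[f^N_t]\|_\infty\le C_0$; the delicate point is to keep $\|\tilde\rho[\mu^N_t]\|_\infty$ bounded as well. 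I would do this by a continuity (bootstrap) argument: on the maximal interval $[0,T_N)$ on which $\|\tilde\rho[\mu^N_s]\|_\infty\le 2C_0$, all constants below are uniform in $N$, and Lemma~\ref{Lemma:rhobound} gives there an estimate of the form $\|\tilde\rho[\mu^N_t]\|_\infty\le C_0+C\,r_N^{-(1+d/2)}W_2(\mu^N_t,f^N_t)$; so once the Gronwall below shows $r_N^{-(1+d/2)}W_2(\mu^N_t,f^N_t)\to0$ uniformly on $[0,T_N)$, one concludes $T_N=T$ for $N$ large and the argument is self-consistent.

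For the Gronwall I would not use $W_2$ but an \emph{anisotropic} Wasserstein distance $\mathcal W_\lambda$ associated with the cost $\lambda^2|q-q'|^2+|p-p'|^2$, $\lambda=\lambda_N\ge1$ to be chosen, and set $E_\lambda(t):=\lambda^2A(t)+B(t)\ge\mathcal W_\lambda^2(\mu^N_t,f^N_t)\ge W_2^2(\mu^N_t,f^N_t)$, with $E_\lambda(0)\le\lambda^2W_2^2(\mu^N_0,f_0)$. Differentiating, $\lambda^2\dot A\le2\lambda^2\int|Q-\bar Q|\,|P-\bar P|\,\dd\pi_0\le\lambda E_\lambda$; and, abbreviating $\mathcal E_t:=k^N[\rho[\mu^N_t]]$, $\bar{\mathcal E}_t:=k^N[\rho[f^N_t]]$ and splitting the force difference at the two base points,
\[
\dot B\le 2\Big(\!\int\!|P-\bar P|^2\dd\pi_0\Big)^{\!1/2}\Big[\Big(\!\int\!|\mathcal E_t(Q)-\mathcal E_t(\bar Q)|^2\dd\pi_0\Big)^{\!1/2}+\Big(\!\int\!|\mathcal E_t(\bar Q)-\bar{\mathcal E}_t(\bar Q)|^2\dd\pi_0\Big)^{\!1/2}\Big].
\]
The first inner integral I bound by the log-Lipschitz estimate for Coulomb fields of bounded densities together with Jensen's inequality (the map $u\mapsto u(1+\log^+\tfrac1u)^2$ is concave on $(0,1)$), getting $\le C\,A(1+|\log A|)^2$ with $C=C(d,C_0)$. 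For the second, since $\bar Q(t,\cdot)$ pushes the $f_0$-marginal of $\pi_0$ onto $\rho[f^N_t]$, it is $\le\|\rho[f^N_t]\|_\infty\,\|k*(h^N[\rho[\mu^N_t]]-h^N[\rho[f^N_t]])\|_{L^2(\IR^d)}^2$, and Loeper's stability estimate \cite{Loeper}, $\|k*(\rho_1-\rho_2)\|_{L^2}\le C(\max_i\|\rho_i\|_\infty)^{1/2}W_2(\rho_1,\rho_2)$, applied to $\rho_i=h^N[\rho[\mu^N_t]],h^N[\rho[f^N_t]]$ together with $W_2(h^N[\rho[\mu^N_t]],h^N[\rho[f^N_t]])\le W_2(\rho[\mu^N_t],\rho[f^N_t])\le W_2(\mu^N_t,f^N_t)$ (convolving both marginals with the same kernel does not increase $W_2$), makes it $\le C\,W_2^2(\mu^N_t,f^N_t)\le C\,E_\lambda$. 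Collecting terms, using $\sqrt A\le\lambda^{-1}\sqrt{E_\lambda}$ and $|\log A|\le|\log E_\lambda|+2\log\lambda$, yields on $[0,T_N)$
\[
\dot E_\lambda\le\Big(\lambda+\frac{C}{\lambda}\big(1+|\log E_\lambda|+\log\lambda\big)+C\Big)\,E_\lambda,\qquad C=C(d,C_0).
\]

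Now choose $\lambda_N:=\max\big(2,(\log\tfrac1{r_N})^{1/2}\big)$, so that $C/\lambda_N\to0$ and $\log\lambda_N=O(\log\log\tfrac1{r_N})$. Writing $y:=|\log E_\lambda|$ (so $y=-\log E_\lambda$, valid since a further routine continuity argument keeps $E_\lambda<1$ throughout) and integrating the resulting reverse differential inequality $\dot y\ge-\alpha_N-\beta_N y$ with $\alpha_N\le C(\log\tfrac1{r_N})^{1/2}$ and $\beta_N=C/\lambda_N\to0$, one gets $y(t)\ge y(0)(1-o(1))-C\,T(\log\tfrac1{r_N})^{1/2}$ uniformly on $[0,T]$. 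Since by hypothesis $y(0)\ge-2\log\lambda_N-\log W_2^2(\mu^N_0,f_0)\ge(2+d+2\epsilon)\log\tfrac1{r_N}(1-o(1))$ and $(\log\tfrac1{r_N})^{1/2}=o(\log\tfrac1{r_N})$, we obtain $y(t)\ge(2+d+2\epsilon)\log\tfrac1{r_N}(1-o(1))$, that is
\[
W_2^2(\mu^N_t,f^N_t)\le E_\lambda(t)\le r_N^{(2+d+2\epsilon)(1-o(1))},\qquad t\in[0,T].
\]
Hence $r_N^{-(2+d)}W_2^2(\mu^N_t,f^N_t)\le r_N^{2\epsilon-o(1)}\to0$, which is the claim; fed back into Lemma~\ref{Lemma:rhobound} this also gives $\|\tilde\rho[\mu^N_t]\|_\infty<2C_0$ on all of $[0,T]$ for $N$ large, so $T_N=T$ and the bootstrap closes.

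The main obstacle is precisely this interplay. One must (i) carry the $L^\infty$-control of $\tilde\rho[\mu^N]$ and the Wasserstein estimate simultaneously, each feeding the other; and (ii) — more delicate — extract from the log-Lipschitz structure only a \emph{sub-polynomial} loss in $r_N$: the naive scalar inequality $\tfrac{\dd}{\dd t}W_2^2\le C\,W_2^2(1+|\log W_2|)$ multiplies the exponent of $r_N$ by $e^{-CT}$ and would prove the theorem only for $\epsilon$ large relative to $T$. It is the anisotropic choice $\lambda_N\sim(\log\tfrac1{r_N})^{1/2}$, balancing the transport-coupling term $\lambda E_\lambda$ against the log-Lipschitz term $\lambda^{-1}E_\lambda|\log E_\lambda|$, that reduces the loss to the factor $\exp\!\big(CT(\log\tfrac1{r_N})^{1/2}\big)=r_N^{-o(1)}$ and makes every $\epsilon>0$ admissible. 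Verifying that Lemma~\ref{Lemma:rhobound} indeed produces the exponent $1+d/2$, and that the Loeper term and the $N$-dependence hidden in $|\log E_\lambda|$ do not spoil this balance, are the points requiring care.
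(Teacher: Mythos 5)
Your strategy is essentially the one in the paper: push a coupling through the two characteristic flows, work in an anisotropic Wasserstein metric weighting the spatial component by $\lambda_N\sim\sqrt{\lvert\log r_N\rvert}$, split the force difference at two base points, control the cross term with Loeper's $L^2$ estimate, and propagate the density bound via Lemma~\ref{Lemma:rhobound} by a bootstrap. The divergence from the paper is in the ``same-density, two base points'' term. The paper exploits the regularization directly: Lemma~\ref{Lemma:Lip-Log}~(ii) gives a genuine Lipschitz bound on $\chi^N*k*\rho$ with constant $C_L\,\lvert\log r_N\rvert(\lVert\rho\rVert_1+\lVert\rho\rVert_\infty)$, and once the spatial coordinate is weighted by $\sqrt{\lvert\log r_N\rvert}$ this collapses into a \emph{linear} Gronwall inequality $\frac{\dd}{\dd t}D^*\le C\sqrt{\lvert\log r_N\rvert}\,D^*$. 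You instead note that $k^N[\rho]=k*(\chi^N*\chi^N*\rho)$ is a Coulomb field of a bounded density and use the $r_N$-\emph{independent} log-Lipschitz modulus, Jensen for the concave function $u\mapsto u(1+\log^+\tfrac1u)^2$, and then a nonlinear Gronwall in $y=-\log E_\lambda$. Both routes work because the anisotropy absorbs the extra logarithm into a sub-polynomial loss $\exp(CT\sqrt{\lvert\log r_N\rvert})=r_N^{-o(1)}$. Your version is more involved (nonlinear reverse Gronwall) but avoids invoking the $r_N$-dependent Lipschitz lemma; the paper's version is cleaner and its device $D^*(t)=\min\{1,r_N^{-(1+d/2)}D(t)\}$ handles the bootstrap more slickly than an explicit continuity argument.

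One small inaccuracy in your write-up: the step ``using $\sqrt A\le\lambda^{-1}\sqrt{E_\lambda}$ and $\lvert\log A\rvert\le\lvert\log E_\lambda\rvert+2\log\lambda$'' does not hold as two separate inequalities --- if $A\ll\lambda^{-2}E_\lambda$ then $\lvert\log A\rvert$ can be arbitrarily large. The combined bound $\sqrt A(1+\lvert\log A\rvert)\le\lambda^{-1}\sqrt{E_\lambda}(1+\lvert\log E_\lambda\rvert+2\log\lambda)$ is nonetheless correct, but the right justification is the monotonicity of $u\mapsto\sqrt u(1+\lvert\log u\rvert)$ on $(0,e^{-1})$ together with $A\le\lambda^{-2}E_\lambda<e^{-1}$ (which your bootstrap hypothesis $E_\lambda<1$, $\lambda\ge 2$ guarantees). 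With that fix the argument closes.
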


\noindent Since we will also show that $W_2(f^N_t, f_t) = o(r_N^{1-\epsilon})$ (Proposition \ref{Prop:fNtof2}) this establishes a particle approximation of the Vlasov-Poisson equation for initial conditions satisfying \eqref{sufficientlyfast}. 

\begin{Theorem}[Typicality Result]\label{Thm:Thm}
	Let $f_0 \in L^\infty(\IR^d\times \IR^d)$ a probability measure such that the Vlasov-Poisson equation \eqref{VP} has a unique solution on $[0,T^*), T^* \in \IR^+ \cup \lbrace + \infty \rbrace$ with $f(0,\cdot,\cdot)=f_0$. Assume that the sequence $(f_N)_N$ of solutions to the regularized Vlasov-Poisson equation \eqref{RVP} with the same initial data satisfies the uniform bound \eqref{GeneralAssumption} on the induced charge densities. Assume, in addition, that there exists $k > 4$ such that 
	\begin{equation}\label{finitemoment}
	M_k(f_0):= \int (\lvert q \rvert + \lvert p \rvert)^k\, f_0(q,p) \, \dd q\, \dd p < +\infty.
	\end{equation}
	\noindent Suppose that $r_N \geq N^{-\delta}$ with
	\begin{equation*} 
	\delta = \frac{1-\epsilon}{d(2+d +2\epsilon)}, \; \epsilon >0. 
	\end{equation*}
	Then there exist constants $C_1, C_2, C_3$ such that for all $T<T^*$ and $N$ large enough that $r_N \leq \exp[-(\frac{2C_1T+1}{\epsilon})^2]$ it holds that
	\begin{equation}
	\IP_0\Bigl[ \sup\limits_{t\in [0,T]} W_2(\mu^N_t[Z], f_t) > r_N^{1-\epsilon} \Bigr] \leq C_2\bigl(e^{-C_3 N^{\epsilon}} +  N^{1-\frac{k}{2}+\frac{k}{2d}} ),
	\end{equation}
	where the probability $\IP_0$ is defined in terms of the product measure $\otimes^N f_0$ on $(\IR^{d}\times \IR^{d})^N$.
	The constant $C_1$ depends on $d, \chi$ and $C_0$ as in \eqref{GeneralAssumption}, while $C_2, C_3$ depend on $d$, $k$ and $M_k(f_0)$. 
\end{Theorem}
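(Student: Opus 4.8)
\noindent The plan is to reduce, by two triangle inequalities, the random quantity $\sup_{t\le T}W_2(\mu^N_t[Z],f_t)$ to a single random object evaluated at time $t=0$ — the Wasserstein distance between the initial empirical measure and $f_0$ — and then to control that object by a concentration estimate for i.i.d.\ samples. Concretely, I would first write, for $0\le t\le T$,
\begin{equation*}
W_2(\mu^N_t[Z],f_t)\ \le\ W_2(\mu^N_t[Z],f^N_t)+W_2(f^N_t,f_t),
\end{equation*}
and invoke Proposition~\ref{Prop:fNtof2}, which gives $\sup_{t\le T}W_2(f^N_t,f_t)=o(r_N^{1-\epsilon})$; so for $N$ large (part of what the smallness hypothesis on $r_N$ secures) this term is $\le\tfrac12 r_N^{1-\epsilon}$, a deterministic statement, and it remains to bound $\IP_0\big[\sup_{t\le T}W_2(\mu^N_t[Z],f^N_t)>\tfrac12 r_N^{1-\epsilon}\big]$.

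For the second step I would extract the quantitative content of the proof of Proposition~\ref{Prop:Prop}. That proof propagates in time — via the Loeper-type stability estimate for the mollified Coulomb force \cite{Loeper}, the $L^\infty$-bound on $\tilde\rho[\mu^N_t]$ from Lemma~\ref{Lemma:rhobound}, and the anisotropic Wasserstein functional — a Gronwall--Osgood inequality whose integrated form is, schematically,
\begin{equation*}
\sqrt{\log\!\big(1/W_2(\mu^N_t[Z],f^N_t)\big)}\ \ge\ \sqrt{\log\!\big(1/W_2(\mu^N_0[Z],f_0)\big)}-C_1 t
\end{equation*}
(up to the fixed $r_N$-powers built into the anisotropic metric), with $C_1=C_1(d,\chi,C_0)$; the second-order structure $\ddot Q=k^N[\rho](Q)$ together with the logarithmic modulus of the Coulomb force is what produces the square root here, so that over a fixed $[0,T]$ the exponent is preserved up to a relative error $O(1/\sqrt{|\log r_N|})$ rather than being contracted by a constant factor. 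Hence, on the event $\mathcal A_N:=\{W_2(\mu^N_0[Z],f_0)\le r_N^{1+d/2+\epsilon}\}$, and provided $r_N\le\exp[-((2C_1T+1)/\epsilon)^2]$ — a condition of this type being exactly what forces the $O(1/\sqrt{|\log r_N|})$ loss below the surplus between the exponents $1+d/2+\epsilon$ and $1-\epsilon$ — one gets $\sup_{t\le T}W_2(\mu^N_t[Z],f^N_t)\le\tfrac12 r_N^{1-\epsilon}$, so that
\begin{equation*}
\IP_0\Big[\sup_{t\le T}W_2(\mu^N_t[Z],f_t)>r_N^{1-\epsilon}\Big]\ \le\ \IP_0\big[W_2(\mu^N_0[Z],f_0)>r_N^{1+d/2+\epsilon}\big].
\end{equation*}

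The third step is the probabilistic one. Under $\IP_0=\otimes^N f_0$, $\mu^N_0[Z]$ is the empirical measure of $N$ i.i.d.\ points of $\IR^d\times\IR^d=\IR^{2d}$, and $f_0$ has finite moment of order $k>4$. With the prescribed $r_N\ge N^{-\delta}$, $\delta=\frac{1-\epsilon}{d(2+d+2\epsilon)}$, one checks the identity $\delta(1+d/2+\epsilon)=\frac{1-\epsilon}{2d}$, so the threshold in $\mathcal A_N$ obeys $r_N^{1+d/2+\epsilon}\le N^{-(1-\epsilon)/(2d)}$ — that is, it is $N^{\epsilon/(2d)}$ times the generic order $N^{-1/(2d)}$ of $W_2$ in dimension $2d$. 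A standard concentration estimate for the empirical Wasserstein distance (stretched-exponential concentration above a small multiple of the mean on any bounded region, plus polynomial control of the escaping mass through $M_k(f_0)$; the estimates are of the kind used in \cite{HaurayJabin}) then gives
\begin{equation*}
\IP_0\big[W_2(\mu^N_0[Z],f_0)>N^{-(1-\epsilon)/(2d)}\big]\ \le\ C_2\big(e^{-C_3 N^{\epsilon}}+N^{\,1-\frac k2+\frac{k}{2d}}\big),
\end{equation*}
with $C_2,C_3$ depending only on $d,k,M_k(f_0)$ (in the borderline case $d=2$ the bounded-region term carries a harmless $\log N$ factor). Chaining the three displays proves the theorem.

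The step I expect to be the main obstacle is the second one: making Proposition~\ref{Prop:Prop} genuinely quantitative, i.e.\ tracking the exact dependence on $r_N$ of the amplification factor through the Osgood argument for the cut-off Coulomb force and through the anisotropic metric, and checking that the resulting loss is absorbed precisely when $r_N\le\exp[-((2C_1T+1)/\epsilon)^2]$. Once the threshold $N^{-(1-\epsilon)/(2d)}$ and the moment condition $k>4$ are matched against each other, the concentration step is routine.
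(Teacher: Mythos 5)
Your proposal is correct and follows essentially the same route as the paper: a triangle inequality combined with Proposition \ref{Prop:fNtof2}, the quantitative form of the Gronwall estimate inside the proof of Proposition \ref{Prop:Prop} (whose $e^{C_1 T\sqrt{\lvert\log r_N\rvert}}=r_N^{-C_1T/\sqrt{\lvert\log r_N\rvert}}$ amplification is absorbed by the $\epsilon$-surplus exactly under the stated smallness condition on $r_N$), and a concentration bound for $W_2(\mu^N_0[Z],f_0)$ at the threshold $r_N^{1+d/2+\epsilon}\geq N^{-(1-\epsilon)/(2d)}$ (note the direction of this inequality, which you state backwards but use correctly). The only cosmetic difference is that the paper invokes the Fournier--Guillin concentration theorem explicitly where you appeal to a generic estimate of the same form.
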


\begin{Remark}\mbox{}
	\begin{enumerate} 
		\item In dimension 3, the necessary cut-off is of order $N^{-\delta}$ with $\delta <\frac{1}{15}$. 
		\item If the finite moment condition \eqref{finitemoment} is replaced by the assumption of a finite exponential moment $\int e^{\gamma \lvert x \rvert^\kappa} \dd f_0(x)$, the rate of convergence becomes exponential, as well. This holds, in particular, for compactly supported $f_0$. 
	\end{enumerate}
\end{Remark}

\subsection{Sketch of the Proof}
We give here a brief sketch of our derivation and the central concepts and ideas on which it is based. 
\begin{enumerate}
	\item  To control the distance between microscopic density and mean field density, we introduce a variant $W^N_2$ of the second Wasserstein distance defined with respect to the $N$-dependent metric: 
	\begin{equation*} d^N\bigl( (q_1,p_1), (q_2, p_2)\bigr) := (1 \vee \sqrt{\lvert\log(r_N)\rvert})\, \lvert q_1 - q_2 \rvert + \lvert p_1-p_2 \rvert.\end{equation*}
	where $ a \vee b := \max \lbrace a, b \rbrace$. 
	\item  We use an estimate from Loeper's proof of uniqueness of weak solutions with bounded density \cite{Loeper} to control the $L^2$-norm of the difference between mean field force and microsocpic force in terms of the quadratic Wasserstein distance. 
	
	\item  The regularization yields a Lipschitz bound on the microscopic force that diverges logarithmically with $N$. In terms of the modified Wasserstein distance, this leads to a Gronwall estimate of the form \begin{equation*}\label{Gronwallestimate} \frac{\dd}{\dd t} W^N_2(\mu^N_t, f^N_t) \leq C \sqrt{\lvert\log(r_N)\rvert}  W^N_2(\mu^N_t, f^N_t). \end{equation*} 
	
	\item The previous bounds can be applied if the (smeared) microscopic charge density $\tilde \rho^\mu = \chi^N*\rho[\mu_t]$ remains bounded uniformly in $N$. We show that this can be assured as long as $W_2(\mu^N_t[Z],f^N_t) = O(r_N^{-(1+d/2)})$. Given a sufficiently fast rate of convergence at $t=0$, i.e. assumption  \eqref{sufficientlyfast}, we conclude with 3. that this bound propagates. 

	\item It remains to check that the constraints so imposed on the initial data are satisfied for typical $Z$, if the initial configuration is chosen randomly according to the product law $\otimes^N f_0$. This is achieved with a recent large deviation estimate found by Fournier and Guilin \cite{Fournier}. This estimate also sets the upper bound on the rate at which $r_N$ can go to zero in the limit $N \to \infty$. 
	
\end{enumerate}

\section{A Gronwall-type Argument}
\noindent Our mean field limit is based on the following stability result by Loeper \cite[Thm. 2.9]{Loeper}, which is proven by methods from the theory of optimal transportation.

\begin{Proposition}[Loeper]\label{Prop:Loeper}
\noindent Let $k$ the Coulomb kernel and $\rho_1, \rho_2 \in L^{1}(\IR^d)\cap L^{\infty}(\IR^d)$ two (probability) densities. Then
\begin{equation} \lVert k*\rho_1 - k*\rho_2 \rVert_2 \leq \bigl[\max\lbrace \lVert \rho_1 \rVert_\infty, \rVert \rho_2 \rVert_\infty \rbrace\bigr]^{1/2} \,W_2(\rho_1, \rho_2). \end{equation}

\end{Proposition}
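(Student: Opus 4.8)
\noindent We sketch the proof, which is due to Loeper and rests on optimal transportation. Throughout we may assume $W_2(\rho_1,\rho_2) < \infty$ --- otherwise there is nothing to prove --- so that $\rho_1$ and $\rho_2$ have finite moments; together with $\rho_i \in L^1 \cap L^\infty(\IR^d)$ this suffices to place $E_i := k*\rho_i$ in $L^2(\IR^d)$, also in the borderline case $d=2$. The first step is to invoke Brenier's theorem: since $\rho_1$ is absolutely continuous, there is a convex function $\psi$ such that $T := \nabla\psi$ is the optimal transport map from $\rho_1$ to $\rho_2$, i.e.\ $T_\#\rho_1 = \rho_2$ and $W_2^2(\rho_1,\rho_2) = \int |x - T(x)|^2\,\rho_1(x)\,\dd x$. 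Writing $g := \Psi_1 - \Psi_2$ for the difference of the associated Coulomb potentials, one has $E_1 - E_2 = -\nabla g$ and $-\Delta g = \sigma\, c\,(\rho_1 - \rho_2)$, where $c>0$ is the dimensional constant from Poisson's equation; testing against $g$ and integrating by parts gives
\begin{equation*}
\lt{E_1 - E_2}^2 \;=\; \int |\nabla g|^2\,\dd x \;=\; \sigma\, c \int g\,(\rho_1-\rho_2)\,\dd x .
\end{equation*}

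\noindent Next I would rewrite the last integral as a transport integral along the displacement-interpolation segments $\gamma_\theta(x) := (1-\theta)\,x + \theta\,T(x)$. Using $\rho_2 = T_\#\rho_1$ and the fundamental theorem of calculus,
\begin{equation*}
\int g\,(\rho_1-\rho_2)\,\dd x \;=\; \int\!\bigl(g(x) - g(T(x))\bigr)\,\dd\rho_1(x) \;=\; \int_0^1\!\!\int \nabla g(\gamma_\theta(x))\cdot\bigl(x - T(x)\bigr)\,\dd\rho_1(x)\,\dd\theta .
\end{equation*}
(For a non-smooth Brenier map this identity is justified by regularising $\psi$, or directly via its Alexandrov second-order differentiability, and passing to the limit.) By the Cauchy--Schwarz inequality in $\dd\theta\,\dd\rho_1(x)$, and since $\int_0^1\!\int |x-T(x)|^2\,\dd\rho_1(x)\,\dd\theta = W_2^2(\rho_1,\rho_2)$, the right-hand side is bounded by
\begin{equation*}
\Bigl(\int_0^1\!\!\int |\nabla g(\gamma_\theta(x))|^2\,\dd\rho_1(x)\,\dd\theta\Bigr)^{1/2}\; W_2(\rho_1,\rho_2).
\end{equation*}

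\noindent The crucial point --- and the step I expect to be the main obstacle --- is to bound the first factor by pushing $\rho_1$ forward along $\gamma_\theta$. Setting $\rho_\theta := (\gamma_\theta)_\#\rho_1$, one has $\int |\nabla g(\gamma_\theta(x))|^2\,\dd\rho_1(x) = \int |\nabla g(y)|^2\,\rho_\theta(y)\,\dd y$, and I claim the displacement-convexity bound
\begin{equation*}
\lVert \rho_\theta \rVert_\infty \;\le\; \max\{\lVert\rho_1\rVert_\infty,\,\lVert\rho_2\rVert_\infty\} \;=:\; M, \qquad \theta \in [0,1].
\end{equation*}
This follows from the two Monge--Amp\`ere identities $\rho_1(x) = \rho_\theta(\gamma_\theta(x))\,\det D\gamma_\theta(x)$ and $\rho_1(x) = \rho_2(T(x))\,\det D^2\psi(x)$, together with $D\gamma_\theta = (1-\theta)\,\mathrm{Id} + \theta\,D^2\psi$ and the Minkowski determinant inequality (concavity of $A\mapsto(\det A)^{1/d}$ on positive semidefinite matrices), which give $(\det D\gamma_\theta)^{1/d} \ge (1-\theta) + \theta\,(\det D^2\psi)^{1/d}$ and hence
\begin{equation*}
\rho_\theta(\gamma_\theta(x)) \;\le\; \frac{\rho_1(x)}{\bigl[\,(1-\theta) + \theta\,(\rho_1(x)/\rho_2(T(x)))^{1/d}\,\bigr]^{d}} \;\le\; \max\{\rho_1(x),\,\rho_2(T(x))\}
\end{equation*}
for $\rho_1$-a.e.\ $x$, the last inequality being an elementary estimate in one real variable; pushing forward then gives $\lVert\rho_\theta\rVert_\infty \le M$. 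The genuine difficulty here is the standard optimal-transport bookkeeping --- the Brenier map is only of bounded variation, the Monge--Amp\`ere relations hold only almost everywhere, and one must regularise $\rho_1,\rho_2$ (or $\psi$) and pass to the limit.

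\noindent Granting the bound, $\int |\nabla g(y)|^2\,\rho_\theta(y)\,\dd y \le M\,\lt{\nabla g}^2 = M\,\lt{E_1 - E_2}^2$, so the first factor above is at most $M^{1/2}\,\lt{E_1 - E_2}$. Combining this with the identity for $\lt{E_1-E_2}^2$ gives
\begin{equation*}
\lt{E_1 - E_2}^2 \;\le\; c\, M^{1/2}\,\lt{E_1 - E_2}\,W_2(\rho_1,\rho_2),
\end{equation*}
and dividing by $\lt{E_1 - E_2}$ yields the assertion, the dimensional constant $c$ being absorbed into the normalisation of $k$ (as in Loeper's formulation) or retained as a harmless explicit prefactor. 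The only non-elementary inputs are Brenier's theorem and the interpolation bound $\lVert\rho_\theta\rVert_\infty \le M$; the remainder is integration by parts and Cauchy--Schwarz.
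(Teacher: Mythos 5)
The paper does not prove this proposition; it cites it directly as Loeper's Theorem 2.9 in \cite{Loeper}, so there is no in-paper proof to compare against. Your sketch is, however, a correct reconstruction of Loeper's original optimal-transport argument: the energy identity $\lt{\nabla g}^2 = \pm c\int g(\rho_1-\rho_2)$, the rewriting as a path integral along McCann's displacement interpolation $\gamma_\theta$, Cauchy--Schwarz to isolate $W_2$, and the displacement $L^\infty$-bound $\lVert(\gamma_\theta)_\#\rho_1\rVert_\infty\le\max\{\lVert\rho_1\rVert_\infty,\lVert\rho_2\rVert_\infty\}$ via Minkowski's determinant inequality are precisely the ingredients of Loeper's proof, and your elementary one-variable estimate closing the interpolation bound is correct. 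Two small caveats: the dimensional constant is not ``absorbed into the normalisation of $k$'' with the paper's convention $k(q)=\sigma q/|q|^d$ (one picks up a factor of $|S^{d-1}|$, and for $d\neq3$ also $d-2$, relative to Loeper's $-\Delta\Psi=\rho$ normalisation), so the statement as printed is really only correct up to such a constant; and the aside that $W_2(\rho_1,\rho_2)<\infty$ implies finite moments of $\rho_1,\rho_2$ is not true as stated (take $\rho_1=\rho_2$ heavy-tailed) --- the actual point needed in $d=2$ is the cancellation in $k*(\rho_1-\rho_2)$ coming from equal total masses.
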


\noindent Moreover, we require the following estimates on  the mean field force:

\begin{Lemma}\label{Lemma:Lip-Log}
	Let $k$ as before and $\rho \in L^1(\IR^d)\cap L^\infty(\IR^d)$. Then it holds that
	\begin{enumerate}[i)]
		\item $\lVert k * \rho \rVert_\infty \leq   \lvert S^{d-1} \rvert\,\lVert \rho \rVert_\infty + \lVert \rho \rVert_1$.
		\item $\lVert \chi^N * k * \rho \rVert_{Lip} \leq C_{L} (1 \vee \lvert \log(r_N)\rvert) \, \bigl(\lVert \rho \rVert_1 + \lVert \rho \rVert_\infty\bigr)$ 
	\end{enumerate}
	\noindent where we use again the notation $a \vee b := \max \lbrace a , b \rbrace$. $ \lvert S^{d-1} \rvert$ denotes the area of the unit sphere and $C_L$ is a constant depending on $\chi$.
	
\end{Lemma}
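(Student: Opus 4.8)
The plan is to prove (i) by a direct pointwise estimate of the Coulomb convolution, splitting the integration domain into a near and a far zone, and (ii) by reducing the Lipschitz seminorm of $\chi^N*k*\rho$ to $\lVert (\nabla(\chi^N*k))*\rho\rVert_\infty$ and then controlling the gradient of $g:=\chi^N*k$ by letting the derivative fall on whichever of the two convolution factors keeps the resulting integral convergent.

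For (i): since $|k(q)| = |q|^{-(d-1)}$, one has $|k*\rho(x)| \le \int |x-y|^{-(d-1)}\,\rho(y)\,\dd y$; splitting the domain at $|x-y| = 1$, on $\{|x-y|\le 1\}$ estimate $\rho(y)\le\lVert\rho\rVert_\infty$ and use that $\int_{|z|\le 1}|z|^{-(d-1)}\,\dd z = |S^{d-1}|$ (finite because $d-1<d$), while on $\{|x-y|>1\}$ the kernel is $\le 1$, leaving $\lVert\rho\rVert_1$. Adding the two contributions gives (i).

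For (ii): since $\chi^N\in C_0^\infty$ one has $g=\chi^N*k\in C^\infty$, and (by the estimate below) $\nabla g\in L^\infty$, so $g*\rho\in C^1$ with $\nabla(g*\rho)=(\nabla g)*\rho$, whence $\lVert \chi^N*k*\rho\rVert_{Lip}=\lVert (\nabla g)*\rho\rVert_\infty$. The core is the pointwise bound $|\nabla g(x)|\le C\,(|x|\vee r_N)^{-d}$. For $|x|\ge 2r_N$ I keep the derivative on $k$: $\nabla g(x)=\int\chi^N(y)\,\nabla k(x-y)\,\dd y$, and since $|\nabla k(z)|\le C|z|^{-d}$, $|x-y|\ge|x|/2$ on $\supp\chi^N=\{|y|\le r_N\}$, and $\lVert\chi^N\rVert_1=1$, this is $\le C|x|^{-d}$. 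For $|x|<2r_N$ the kernel $|z|^{-d}$ is not locally integrable, so I move the derivative onto the mollifier instead: $\nabla g(x)=\int(\nabla\chi^N)(x-y)\,k(y)\,\dd y$, whose integrand is supported in $\{|y|<3r_N\}$, and which is bounded by $\lVert\nabla\chi^N\rVert_\infty\int_{|y|<3r_N}|y|^{-(d-1)}\,\dd y\le C\,r_N^{-(d+1)}\cdot r_N = C r_N^{-d}$ (here $d-1<d$ is used again). Combining the two ranges gives the stated gradient bound.

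Finally I insert this into $|(\nabla g)*\rho(x)|\le\int|\nabla g(x-y)|\,\rho(y)\,\dd y$ and split once more at $|x-y|=1$: the near part is $\le\lVert\rho\rVert_\infty\int_{|z|\le 1}(|z|\vee r_N)^{-d}\,\dd z$, where the ball $\{|z|\le r_N\}$ contributes a dimensional $O(1)$ term and the annulus $\{r_N<|z|<1\}$ contributes $|S^{d-1}|\,|\log r_N|$, giving $\le C\,(1\vee|\log r_N|)\,\lVert\rho\rVert_\infty$; the far part is $\le\lVert\rho\rVert_1$ since $|\nabla g|\le C$ on $\{|x-y|\ge1\}$. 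Summing yields (ii) with $C_L$ depending only on $\chi$ (through $\lVert\nabla\chi\rVert_\infty$) and the dimension. I expect the only genuinely delicate point to be the case split in the gradient estimate for $g$: putting the derivative on $k$ is the correct choice only above the cut-off scale $r_N$ and fails below it, whereas putting it on $\chi^N$ is lossy by a factor $\sim r_N^{-1}$ above that scale; it is exactly the crossover at $|x|\sim r_N$ that upgrades the Log-Lipschitz bound for $k*\rho$ to a genuine (though $N$-dependent) Lipschitz bound for $\chi^N*k*\rho$. Everything else is the routine near-zone/far-zone splitting of convolution integrals.
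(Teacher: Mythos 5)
Your part (i) is identical to the paper's argument: split at $\lvert x-y\rvert=1$, bound $\rho$ by $\lVert\rho\rVert_\infty$ on the near zone where $\int_{\lvert z\rvert\le1}\lvert z\rvert^{-(d-1)}\dd z=\lvert S^{d-1}\rvert$, and by $\lVert\rho\rVert_1$ on the far zone. For part (ii) the underlying idea is the same as the paper's --- let the derivative fall on $k$ where $\nabla k$ is still integrable against $\rho\in L^1\cap L^\infty$, and on the mollifier $\chi^N$ where it is not --- but the decomposition is genuinely different. The paper splits the kernel itself, $k=k\vert_{\lvert x\rvert\ge r_N^{d+1}}+k\vert_{\lvert x\rvert< r_N^{d+1}}$, and applies Young-type inequalities to each piece globally; the cutoff radius $r_N^{d+1}$ is chosen precisely so that $\lVert\nabla\chi^N\rVert_\infty\,\lVert k\vert_{\lvert x\rvert<r_N^{d+1}}\rVert_1=\lvert S^{d-1}\rvert\,\lVert\nabla\chi\rVert_\infty=O(1)$. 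You instead split according to the evaluation point at the natural scale $\lvert x\rvert\sim r_N$ and prove the pointwise bound $\lvert\nabla(\chi^N*k)(x)\rvert\le C(\lvert x\rvert\vee r_N)^{-d}$, which you then integrate against $\rho$ with a second near/far split. Both arguments are correct and yield the same $(1\vee\lvert\log r_N\rvert)$ factor up to constants; your intermediate pointwise estimate is the sharper statement (it is the actual gradient profile of the mollified Coulomb kernel) and makes the origin of the logarithm more transparent, at the modest cost of the support argument $\lvert x-y\rvert\ge\lvert x\rvert/2$ on $\supp\chi^N$. A further small advantage of your route: the paper's far piece $\nabla\bigl(k\vert_{\lvert x\rvert\ge r_N^{d+1}}\bigr)$ strictly speaking carries a distributional surface term on the sphere $\lbrace\lvert x\rvert=r_N^{d+1}\rbrace$ that the paper silently discards, whereas your version never differentiates a truncated kernel.
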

\begin{proof} 
	
	i) For the first inequality, we compute
	\begin{align*}\notag \lVert k * \rho \rVert_\infty &\leq \Bigl \lVert \int\limits_{\lvert y \rvert <1} k(y) \rho_t(x-y) \, \mathrm{d}^dy \Bigr \rVert_\infty + \Bigl \lVert \int\limits_{\lvert y \rvert >1} k(y) \rho(x-y) \, \mathrm{d}^dy \Bigr \rVert_\infty\\ \label{forcebound2}
	&\leq \lVert \rho \rVert_\infty \,  \int\limits_{\lvert y \rvert <1} \frac{1}{\lvert y \rvert^{d-1}} \mathrm{d}^d y + \lVert \rho \rVert_1 = \lvert S^{d-1} \rvert \lVert \rho \rVert_\infty + \lVert \rho \rVert_1
	\end{align*}
	
	\noindent ii)We split the expression as
	\begin{align*} \bigl \lVert \nabla ( \chi * k * \rho)  \bigr\rVert_\infty &\leq \bigl\lVert \nabla ( \chi * k\vert_{x \geq r_N^{d+1}} * \rho)  \bigr\rVert_\infty + \bigl\lVert \nabla ( \chi * k\vert_{x < r_N^{d+1}} * \rho)  \bigr\rVert_\infty\\
	&\leq \bigl\lVert \chi^N  \bigr\rVert_1 \, \bigl\lVert \, \nabla k \vert_{x \geq r_N^{d+1}} * \rho  \bigr\rVert_\infty + \bigl\lVert \nabla \chi^N  \bigr\rVert_\infty\, \bigl\lVert k \vert_{x < r_N^{d+1}}  \bigr\rVert_1 \,\bigl\lVert \rho \bigr\rVert_\infty
	\end{align*}
	Now, we have: 
	\begin{align*}\notag \biggl \lvert \nabla k \vert_{x \geq r_N^{d+1}} * \rho \, (x)\biggr\rvert &\leq \int\limits_{\lvert y \rvert \geq r_n^{d+1}} \frac{1}{\lvert y \rvert ^d}\, \rho(x-y)\, \dd^d y \\\notag
	&\leq  \int\limits_{ r_N^{d+1} \leq \lvert y \rvert \leq 1} \frac{1}{\lvert y \rvert ^d} \rho(x-y) \dd^d y + \int\limits_{\lvert y \rvert > 1} \frac{1}{\lvert y \rvert ^d} \rho(x-y) \dd^d y \\
	&\leq (d+1) C\, \lVert  \rho \rVert_\infty \log(r_N^{-1}) + \lVert \rho \rVert_1.
	\end{align*} 
	
	\noindent Furthermore: 
	\begin{equation*}\lVert \nabla \chi^N \rVert_\infty = r_N^{-(d+1)} \lVert \nabla \chi \rVert_\infty \end{equation*}
	and 
	
	\begin{equation*}\bigl \lVert k \vert_{x < r_N^{d+1}} \bigr\rVert_1 = \int\limits_{\lvert y \rvert < r_N^{d+1}} \frac{1}{\lvert y \rvert^{d-1}}\; \dd^d y = \lvert S^{d-1} \rvert \, r_N^{d+1}. \end{equation*}
	Putting everything together, the statement follows. 
\end{proof}  

\noindent For the continuous solutions $f^N_t$ to the (regularized) Vlasov-Poisson equation, the corresponding charge-densities $\rho_t = \rho[f^N_t]$ are bounded by assumption. The challenge is to provide a bound on the microscopic charge density that holds uniformly in $N$, i.e. as the electron radius decreases and the forces become more singular. The idea is to show that as long as $\mu_t^N$ and $f^N_t$ are close as probability measures, the $L^\infty$-norm of $\rho[f^N_t]$ provides a bound on the $L^\infty$-norm of $\tilde\rho[\mu^N_t]$. A simple such estimate was obtained in \cite[Prop. 2.1]{BGV} for the first Wasserstein distance. In view of the general Kantorovich-Rubinstein duality, we generalize this result to Wasserstein distances of higher order.

 
 


\begin{Lemma}\label{Lemma:rhobound} 
	Let $\rho_1, \rho_2$ two probability measures on $\IR^d$ and $\rho_2 \in L^\infty(\IR^d)$. Then:
	\begin{equation} 
		\lVert \tilde{\rho}_1  \rVert_\infty \leq \lvert \mathrm{B}^d(2) \rvert\, \lVert \rho_2 \rVert_\infty +  r_N^{-(p+d)} \,  W_p^p(\rho_1, \rho_2),
	\end{equation}
	where $\mathrm{B}^d(2) \subset \IR^d$ is the $d$-dimensional ball with radius 2. 
\end{Lemma}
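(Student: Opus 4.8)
The plan is to bound the smeared density $\tilde\rho_1 = \chi^N *_x \rho_1$ pointwise by splitting the contribution at each point $q$ according to whether mass of $\rho_1$ lies within distance $2r_N$ of $q$ or farther away, and to exploit an optimal coupling $\pi \in \Pi(\rho_1,\rho_2)$ realizing (up to $\varepsilon$) the Wasserstein distance $W_p(\rho_1,\rho_2)$. Since $\mathrm{supp}(\chi) \subseteq \mathrm{B}(1;0)$ and $\|\chi\|_\infty = 1$, we have $\chi^N(x) \le r_N^{-d}\,\mathds{1}_{\lvert x\rvert \le r_N}$, hence for any $q$,
\begin{equation*}
\tilde\rho_1(q) = \int \chi^N(q-q')\,\dd\rho_1(q') \le r_N^{-d}\,\rho_1\bigl(\mathrm{B}(r_N;q)\bigr) \le r_N^{-d}\,\rho_1\bigl(\mathrm{B}(2r_N;q)\bigr).
\end{equation*}
So it suffices to show $\rho_1(\mathrm{B}(2r_N;q)) \le r_N^d\,\lvert \mathrm{B}^d(2)\rvert\,\|\rho_2\|_\infty + r_N^{-p}\,W_p^p(\rho_1,\rho_2)$ for every $q$.

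For the main step, fix $q$ and write $B := \mathrm{B}(2r_N;q)$, $B' := \mathrm{B}(3r_N;q)$. Let $\pi$ be a coupling with $\int \lvert x-y\rvert^p \dd\pi(x,y) \le W_p^p(\rho_1,\rho_2) + \varepsilon$. Partition the mass $\rho_1(B)$ according to where the $\rho_2$-partner lands: either the partner $y$ lies in $B'$, contributing at most $\rho_2(B') \le \lvert \mathrm{B}^d(3)\rvert\,r_N^d\,\|\rho_2\|_\infty$; or the partner lies outside $B'$, in which case $\lvert x-y\rvert \ge r_N$, so that chunk of mass is at most $r_N^{-p}\int \lvert x-y\rvert^p\dd\pi \le r_N^{-p}(W_p^p(\rho_1,\rho_2)+\varepsilon)$. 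Thus
\begin{equation*}
\rho_1(B) \le \lvert \mathrm{B}^d(3)\rvert\,r_N^d\,\|\rho_2\|_\infty + r_N^{-p}\bigl(W_p^p(\rho_1,\rho_2)+\varepsilon\bigr),
\end{equation*}
and letting $\varepsilon \to 0$, then multiplying by $r_N^{-d}$, gives a bound of the claimed form (with constant $\lvert \mathrm{B}^d(3)\rvert$ rather than $\lvert\mathrm{B}^d(2)\rvert$; to recover exactly $\lvert\mathrm{B}^d(2)\rvert$ one uses the sharper inclusion $\mathrm{supp}(\chi^N) \subseteq \mathrm{B}(r_N;0)$ together with the ball $\mathrm{B}(r_N;q)$ directly and pairs against $\mathrm{B}(2r_N;q)$, so that the far-partner threshold is still $r_N$ and the near ball has radius $2r_N$).

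The only genuinely delicate point is the bookkeeping of which ball radii to use so that the near-field term carries the constant $\lvert\mathrm{B}^d(2)\rvert$ while the far-field term still enjoys separation $\ge r_N$ (so that $\lvert x-y\rvert^p \ge r_N^p$ and the factor $r_N^{-(p+d)}$ emerges after multiplying by $r_N^{-d}$); everything else is the elementary split above plus the uniform pointwise estimate $\chi^N \le r_N^{-d}\mathds{1}_{\mathrm{B}(r_N;0)}$. No compactness or duality is actually needed — the Kantorovich–Rubinstein remark in the text motivates the generalization from $p=1$ to general $p$, but the proof is a direct coupling argument. I would also note in passing that the bound is vacuous unless $W_p(\rho_1,\rho_2)$ is small compared to $r_N^{1+d/p}$, which is precisely the regime enforced by hypothesis \eqref{sufficientlyfast}.
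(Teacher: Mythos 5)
Your coupling argument is correct, and once you use $B(r_N;q)$ as the near ball (since $\supp\chi^N\subseteq B(r_N;0)$) paired against $B(2r_N;q)$ it reproduces the lemma with the exact constant $\lvert\mathrm{B}^d(2)\rvert$; but the route is genuinely different from the paper's. The paper generalizes the duality proof of \cite[Prop.~2.1]{BGV}: it introduces the $c$-conjugate $\Phi^c(z):=\sup_y\{\Phi(y)-\lvert y-z\rvert^p\}$ of $\Phi:=r_N^{d+p}\chi^N(x-\cdot)$, inserts and subtracts $\int\Phi^c\,\dd\rho_2$, bounds the difference $\int\Phi\,\dd\rho_1-\int\Phi^c\,\dd\rho_2$ by $W_p^p(\rho_1,\rho_2)$ via the Kantorovich duality \eqref{Kantorovich}, and then controls the residual $\int\Phi^c\,\dd\rho_2$ by the two facts $0\le\Phi^c\le r_N^p$ and $\supp\Phi^c\subseteq B(2r_N;x)$. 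Your proof replaces this by its primal counterpart: take a near-optimal coupling $\pi\in\Pi(\rho_1,\rho_2)$, write $\rho_1(B(r_N;q))$ as the $\pi$-mass of $B(r_N;q)\times\IR^d$, split according to whether the partner lies in $B(2r_N;q)$ or not, bound the first piece by the $\rho_2$-marginal $\rho_2(B(2r_N;q))\le\lvert\mathrm{B}^d(2)\rvert r_N^d\lVert\rho_2\rVert_\infty$, and bound the second by Markov's inequality for $\lvert x-y\rvert^p/r_N^p$. These are dual faces of the same estimate --- the near/far split of the transport plan is exactly what the sup-norm and support bounds on $\Phi^c$ encode on the dual side --- but the coupling version is more elementary, avoids introducing the $c$-transform and verifying where it vanishes, and makes transparent that only $\supp\chi^N$ and $\lVert\chi^N\rVert_\infty$ enter (the smoothness of $\chi$ plays no role). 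One presentational remark: your main display uses $B(2r_N;q)$ against $B(3r_N;q)$ and gets the worse constant $\lvert\mathrm{B}^d(3)\rvert$, relegating the sharp choice $B(r_N;q)$ against $B(2r_N;q)$ to a parenthesis; that parenthetical version is the one matching the lemma and should simply be the proof.
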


\begin{proof}
	For any integrable function $\Phi$, we consider the $c$-conjugate
	\begin{equation*} \Phi^c(y) := \sup\limits_{x} \lbrace \Phi(x) - \lvert x-y \rvert^p \rbrace\end{equation*} 
	This is the smallest function satisfying $\Phi^c(y) \geq \Phi(y)$ and $\Phi(x) - \Phi^c(y) \leq \lvert x-y \rvert^p, \, \forall x,y \in \IR^d$.\\ Now, we write
	\begin{equation*} \begin{split}\tilde{\rho}_1 (x) = r_N^{-(d+p)} \Bigl[\int r_N^{d+p}\chi^N(x-y)  \rho_1(y) \dd y - \int (r_N^{d+p}\chi^N(x-\cdot))^c(z) \rho_1(z) \, \dd z\\ + \int (r_N^{d+p}\chi^N(x-\cdot))^c(z) \, \rho_1(z) \dd z \Bigr] \end{split} \end{equation*}
	By the Kantorovich duality theorem \eqref{Kantorovich} we have
	\begin{equation*} \int r_N^{d+p} \chi^N(x-y)\,  \rho_1(y) \dd y \, - \int (r_N^{d+p}\chi^N(x-\cdot))^c(z)\,  \rho_2(z) \dd z \leq W_p^p(\rho_1, \rho_2). \end{equation*}
	It remains to estimate  
	\begin{equation*} \int (r_N^{d+p} \chi^N(x-\cdot))^c(z)\,  \rho_2(z) \, \dd z. \end{equation*}
	
	\noindent Recalling that $\lVert \chi^N \rVert_\infty = r_N^{-d}$, we find 
	\begin{equation*}(r_N^{d+p} \chi^N(x-\cdot))^c(z) = \sup\limits_{y \in \IR^3} \lbrace  r_N^{d+p} \chi^N(x-y) - \lvert y-z \rvert^p \rbrace \leq r_N^{d+p} \lVert \chi^N \rVert_\infty = r_N^{p}.\end{equation*} 
	Moreover, we observe that 
	\begin{equation} \supp (r_N^{d+p} \chi^N(x-\cdot))^c \subseteq \mathrm{B}(2r_N ; x) := \lbrace z \in \IR^3 : \lvert z -x \rvert \leq 2r_N \rbrace, \end{equation}
	since $ \lvert z- x \rvert >2r_N$ implies $\chi^N(x-y) = 0$, unless $\lvert y-z \rvert \geq r_N$. But then: $r_N^{d+p} \chi^N(x-y) - \lvert y-z \rvert^p \leq r_N^{d+p} r_N^{-d} - r_N^p = 0$. 
	Hence,
	\begin{equation*} \begin{split} \int (r_N^{d+p}\chi^N(x-\cdot))^c(z)  \rho_2(z) \dd z
			\leq \lVert  \rho_2 \rVert_\infty \,r_N^p \, \lvert \mathrm{B}(2 r_N; x) \rvert
			\leq 2^d \lvert \mathrm{B}^d(1)\rvert \, \lVert  \rho_2 \rVert_\infty\, r_N^{d+p}. \end{split}\end{equation*}
	
	\noindent In total, we find
	\begin{equation*} \lVert \tilde{\rho}_1 \rVert_\infty \leq r_N^{-(p+d)} \, W_p^p(\rho_1, \rho_2) + \lvert B^d(2)\rvert \lVert  \rho_2 \rVert_\infty\end{equation*}
	as announced.
\end{proof}

\noindent We shall apply the previous Lemma to $\rho_1 := \rho[\mu^N_t(Z)]$ and $\rho_2 := \rho[f^N_t]$ using $\lVert  \rho[f^N_t] \rVert \leq C_\rho$ and $W_2(\rho[\mu^N_t(Z)], \rho[f^N_t]) \leq W_2(\mu^N_t(Z), f^N_t)$ to get a bound on the (smeared) microscopic charge density.\\

\noindent Finally, we need the following inequalities for the smeared densities. 

\begin{Lemma}\label{renormweaknorm}\mbox{}\\
			Let $\chi \in C^\infty_0 (\IR^d)$, $(r_N)_N$ a rescaling sequence and $\chi^N$ the rescaled form-factor as defined in \eqref{rescaled}. Let $\mu, \nu \in \M(\IR^d)$ and $\tilde\nu = \chi^N*_x\nu$ etc. Then we have for  $1\leq p<\infty$:
			\begin{enumerate}[i)]
		\item $W_p(\tilde{\nu}, \nu) \leq r_N$
			\item $W_p(\tilde\mu, \tilde\nu) \leq W_p(\mu, \nu)$.
			\end{enumerate}
	
\end{Lemma}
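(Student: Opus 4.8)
The plan is to realize both smeared measures as the originals perturbed by an independent random displacement, and then transport along this displacement. Concretely, assumptions (ii)--(iii) on $\chi$ make $\chi^N(z)\,\dd z$ a probability measure on $\IR^d$, supported in $\mathrm{B}(r_N;0)$ since $\supp\chi\subseteq\mathrm{B}(1;0)$; I will write $Z$ for a random variable with this law. For $X\sim\nu$ independent of $Z$, the sum $X+Z$ has law $\chi^N*_x\nu=\tilde\nu$, and likewise for $\tilde\mu$; in particular, $\tilde\mu$ and $\tilde\nu$ are again elements of $\M(\IR^d)$ by Fubini and $\lVert\chi^N\rVert_1=1$. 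Both inequalities then follow by choosing an appropriate coupling and bounding its transport cost. (One could equally pass through the Kantorovich--Rubinstein duality \eqref{Kantorovich}, but the coupling argument is shorter.)

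For i), I take $X\sim\nu$ and $Z\sim\chi^N$ independent and use the coupling $(X,X+Z)\in\Pi(\nu,\tilde\nu)$. Its cost is
\[
\IE\bigl[\lvert X-(X+Z)\rvert^p\bigr]=\IE\bigl[\lvert Z\rvert^p\bigr]=\int_{\IR^d}\lvert z\rvert^p\,\chi^N(z)\,\dd z\le r_N^p,
\]
where the last step uses that $\lvert z\rvert\le r_N$ on $\supp\chi^N$ together with $\int\chi^N=1$. Taking $p$-th roots gives $W_p(\tilde\nu,\nu)\le r_N$.

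For ii), I fix $\varepsilon>0$ (the statement is vacuous if $W_p(\mu,\nu)=+\infty$) and pick $\pi\in\Pi(\mu,\nu)$ with $\int\lvert x-y\rvert^p\,\dd\pi\le W_p^p(\mu,\nu)+\varepsilon$. On a product space I realize $(X,Y)\sim\pi$ together with $Z\sim\chi^N$ independent of $(X,Y)$. A short Fubini computation shows that $(X+Z,Y+Z)$ has first marginal $\tilde\mu$ and second marginal $\tilde\nu$, so it lies in $\Pi(\tilde\mu,\tilde\nu)$, and its cost is
\[
\IE\bigl[\lvert(X+Z)-(Y+Z)\rvert^p\bigr]=\IE\bigl[\lvert X-Y\rvert^p\bigr]=\int\lvert x-y\rvert^p\,\dd\pi(x,y)\le W_p^p(\mu,\nu)+\varepsilon.
\]
Hence $W_p^p(\tilde\mu,\tilde\nu)\le W_p^p(\mu,\nu)+\varepsilon$, and letting $\varepsilon\to0$ concludes.

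Every step here is elementary, so there is no real obstacle; the only points that need a little care are the verification of the marginals of the two product couplings (a direct application of Fubini's theorem and the associativity of convolution) and the routine remark that part ii) is trivially true when $W_p(\mu,\nu)$ is infinite.
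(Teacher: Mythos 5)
Your proof is correct. For part~i) your coupling $(X,X+Z)$, with $X\sim\nu$ and $Z\sim\chi^N$ independent, has exactly the joint law $\dd\pi(x,y)=\dd\nu(x)\,\chi^N(y-x)\,\dd y$ that the paper writes down, so the argument is the same one phrased probabilistically. For part~ii) you take a genuinely different route: the paper works on the dual side, taking a Kantorovich-admissible pair $(\Phi_1,\Phi_2)$ and verifying that $(\chi^N*\Phi_1,\chi^N*\Phi_2)$ is still admissible (so that testing against $\tilde\mu,\tilde\nu$ reduces to testing against $\mu,\nu$), whereas you work on the primal side, lifting a near-optimal coupling $\pi\in\Pi(\mu,\nu)$ to the coupling $(X+Z,Y+Z)\in\Pi(\tilde\mu,\tilde\nu)$ by a common random shift. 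The two are dual to one another and equally short; your primal version is slightly more self-contained in that it avoids invoking the duality theorem and makes the ``common noise cancels in the cost'' mechanism explicit, while the paper's dual version sidesteps the $\varepsilon$-approximation needed because the infimum in $W_p$ need not be attained for general admissible $\pi$ (in fact optimal couplings do exist on $\IR^d$, but you are right to be careful). Either argument is acceptable.
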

		
		\begin{proof}
			i) Define $\pi(x,y) := \nu(x) \chi^N(x-y)$ and observe that $\int \dd x \, \pi(x,y) = \tilde\nu(y), \; \int \dd y \, \pi(x,y) = \nu(x)$, hence $\pi \in \Pi(\tilde{\nu},\nu)$. $\pi$ has support in $\lbrace \lvert x -y \rvert < r_N \rbrace$. Thus, we conclude
			\begin{align*}  W_p(\tilde{\nu}, \nu) &= \inf\limits_{\pi' \in \Pi(\nu, \tilde{\nu})} \, \Bigl( \int\limits_{\IR^d\times\IR^d} \lvert x - y \rvert^p \, \dd \pi'(x,y) \, \Bigr)^{1/p}\\
			&\leq \Bigl( \int\limits_{\IR^d\times\IR^d} \lvert x - y \rvert^p \, \dd \pi(x,y) \, \Bigr)^{1/p} \leq r_N.\end{align*}
	
\noindent ii) In view of the Kantorovich duality \eqref{Kantorovich}, we find for $(\Phi_1, \Phi_2) \in L^1(\mu)\times L^1(\nu)$ with $\Phi_1(y) - \Phi_2(x) \leq \lvert x - y \rvert^p$:
				\begin{align*} 
				\int \Phi_1(x)\,  \dd \tilde\mu(x) - \int \Phi_2(y)\, \dd \tilde\nu(y) 
				= \int (\chi *\Phi_1) (x)\, \dd \mu(x)  - \int (\chi*\Phi_2) (y)\, \dd \nu(y)
				\end{align*}
				
				\noindent But $\chi *\Phi_1$ and $\chi*\Phi_2$ also satisfy
				\begin{align*}
				&\bigl\lvert \chi* \Phi_1(x) - \chi *\Phi_2 (y) \bigr\rvert =\Bigl\lvert \int \chi(z) \Phi_1(x-z)\, \dd z  -  \int \chi(z) \Phi_2(y-z)\, \dd z  \Bigr\rvert  \\
				\leq & \int \chi(z)\, \bigl\lvert \Phi_1 (x-z) -  \Phi_2(y-z)\bigr\rvert\, \dd z \leq \int \chi(z)\, \lvert x-y \rvert^p\, \dd z =   \lvert x-y \rvert^p. 
				\end{align*}
				
				\noindent Hence, we have
				\begin{equation*} 
				\int \Phi_1\,  \dd \tilde\mu - \int \Phi_2\, \dd \tilde\nu = \int \tilde \Phi_1\,  \dd \mu - \int \tilde \Phi_2\, \dd \nu \leq W_p(\mu, \nu)
				\end{equation*}
				\noindent and taking the supremum over all $(\Phi_1, \Phi_2)$ yields the desired inequality.
				
			\end{proof}
			
\subsection{ Modified Wasserstein distance}		

\noindent As we want to establish a Gronwall inequality for the distance between empirical density and Vlasov density, we aim for a bound of the form: 
\begin{equation*}
\mathrm{dist}(\mu^N_{t+\Delta t},f^N_{t+\Delta t}) - \mathrm{dist}(\mu^N_{t},f^N_{t}) \propto  \mathrm{dist}(\mu^N_{t},f^N_{t}) \, \Delta t + o(\Delta t).
\end{equation*}
The choice of a metric, giving precise meaning to $\mathrm{dist}(\mu^N_{t},f^N_{t})$, is thus a balancing act. While a stronger metric is, in general, more difficult to control, it also yields stronger bounds as it appears on the right hand side of the Gronwall estimate. 

If we compare the characteristic flow of the mean field dynamics with the flow corresponding to the ``true'', i.e. microscopic, dynamics, the growth in the \textit{spatial} distance is trivially bounded by the distance of the respective momenta. The only problem lies in controlling fluctuations in the force, i.e. the growth of the distance in \textit{momentum} space. The idea, first employed in \cite{PeterDustin}, is thus to be more rigid on deviations in the $q$-coordinates, weighing them with an appropriate $N$-dependent factor, and use this to obtain better control on the forces. 

\begin{Definition}
	Let $(r_N)_{N\in\IN}$ be a rescaling sequence. On $\IR^d\times\IR^d$ we introduce the ($N$-dependent) metric: 
	\begin{equation} d^N\bigl( (q_1,p_1), (q_2, p_2)\bigr) := (1 \vee \sqrt{\lvert\log(r_N)\rvert})\, \lvert q_1 - q_2 \rvert + \lvert p_1-p_2 \rvert.\end{equation}
   Now let $W^N_p(\cdot, \cdot)$ be the p'th Wasserstein metric with respect to $d^N$, i.e.: 
	\begin{equation}\label{Def:WN}
		W^N_p(\mu, \nu) := \inf\limits_{\pi \in \Pi(\mu,\nu)} \, \Bigl( \int\limits_{\IR^d \times \IR^d} d^N(x, y)^p \, \dd \pi(x,y) \, \Bigr)^{1/p}.   
	\end{equation}
	
	\noindent Note that $W_p(\mu, \nu) \leq W^N_p(\mu, \nu) \leq (1 \vee \sqrt{\lvert\log(r_N)\rvert})\, W_p(\mu, \nu)$, $\forall \mu, \nu \in \M(\IR^d\times\IR^d)$. Finally, we define
	\begin{equation} W^*(\mu, \nu) := \min \Bigl\lbrace 1, \, r_N^{ -(1+\frac{d}{2})} \, W^N_2(\mu, \nu) \Bigr\rbrace. \end{equation}
	
\end{Definition} 

\noindent Obviously, convergence with respect to $W^*$ is much stronger than convergence with respect to $W_2$. Concretely, we have for any sequence $(\nu_N)_{N \in \IN}$ and $\nu \in \M(\IR^d \times \IR^d)$: 
\begin{equation*} W^*(\nu_N, \nu) \to 0 \Rightarrow W_2(\nu_N, \nu) = {o}\bigl( r_N^{1+\frac{d}{2}} \bigr). \end{equation*}

\subsection{Deterministic result}

 \noindent We now come to the central part of our argument:


\begin{proof}[\emph{\textbf{Proof of Proposition \ref{Prop:Prop}}}]
	Let $N \in \IN$ and $\pi_0 \in \Pi(\mu^N_0, f_0)$. Let $\varphi^\mu_t = (Q^\mu_t, P^\mu_t)$ and $\varphi^f_t=(Q^f_t, P^f_t)$  the flow induced by the characteristic equation \eqref{chareq} for $\mu_t^N$ and $f^N_t$, respectively. For any $t \in [0,T], \, T <T^*$, define the  ($N$-dependent) measure $\pi_t$ on $\IR^{6N} \times \IR^{6N}$ by $\pi_t = (\varphi^\mu_t, \varphi^f_t)\# \pi_0$. Then $\pi_t \in \Pi(\mu^N_t, f_t), \, \forall t \in [0,T]$. We set
	\begin{equation*}\begin{split}
			D(t):= & \Bigl[ \int\limits_{\IR^6\times\IR^6} d^N(x,y)^2 \; \dd \pi_t(x,y)\Bigr]^{1/2}\\
			= &\Bigl[ \int\limits_{\IR^{6} \times \IR^{6}} \Bigl((1 \vee \sqrt{\lvert\log(r_N)\rvert})\, \lvert x^1-y^1 \rvert + \, \lvert x^2-y^2 \rvert \Bigr)^2 \; \dd \pi_t(x,y)\Bigr]^{1/2}\\ 
			=&\Bigl[ \int\limits_{\IR^{6} \times \IR^{6}} \Bigl((1 \vee \sqrt{\lvert\log(r_N)\rvert})\, \lvert Q^\mu_t(x)-Q^f_t(y) \rvert + \, \lvert P^\mu_t(x)-P^f_t(y) \rvert \Bigr)^2 \;\dd \pi_0(x,y)\Bigr]^{1/2}.
		\end{split}\end{equation*}
		\noindent Note that $W^N_2(\mu^N_t, f^N_t) < D(t)$ for any $\pi_0 \in \Pi(f_0, f_0)$. Now we consider: 
		\begin{equation} D^*(t) := \min\bigl\lbrace 1 , \, r_N^{ -(1+\frac{d}{2})}\,D (t) \bigr\rbrace. \end{equation}
		
		\noindent Obviously, $\frac{\dd}{\dd t}D^*(t) \leq 0$ whenever $D(t) \geq r_N^{1+\frac{d}{2}}$ since $D^*(t)$ is already maximal. For $D(t) < r_N^{1+\frac{d}{2}}$, we compute: 
		\begin{align*}\notag
			\frac{\dd}{\dd t} D^2(t) &=\\ 
			2\, \int &\Bigl((1 \vee \sqrt{\lvert\log(r_N)\rvert})\, \lvert Q^\mu_t(x)-Q^f_t(y) \rvert +  \lvert P^\mu_t(x)-P^f_t(y) \rvert \Bigr)\cdot\\
			&\Bigl((1 \vee \sqrt{\lvert\log(r_N)\rvert})\, \lvert P^\mu_t(x)-P^f_t(y) \rvert +  \, \bigl\lvert \tilde k*\tilde\rho^\mu_t (Q^\mu_t(x)) - \tilde k*\tilde \rho^f_t(Q^f_t(y)) \bigr\rvert\Bigr)\,  \dd \pi_0(x,y).
		\end{align*}
		
		\noindent The interesting term to control is the interaction term
		\begin{align}\notag & \bigl\lvert \tilde k*\tilde\rho^\mu_t \, (Q^\mu_t(x)) - \tilde k*\tilde \rho^f_t \,(Q^f_t(y)) \bigr\rvert \\\label{term1}
			&\leq \bigl\lvert \tilde k*\tilde\rho^\mu_t \, (Q^\mu_t(x)) - \tilde k*\tilde \rho^\mu_t \,(Q^f_t(y)) \bigr\rvert \\
			&+ \bigl\lvert \tilde k*\tilde\rho^\mu_t \, (Q^f_t(y)) - \tilde k*\tilde \rho^f_t \,(Q^f_t(y)) \bigr\rvert  
		\end{align}
		
		\noindent We begin with \eqref{term1} and find with Lemma \ref{Lemma:Lip-Log}:
		\begin{equation}\begin{split} 
				& \bigl\lvert \tilde k*\tilde\rho^\mu_t \, (Q^\mu_t(x)) - \tilde k*\tilde \rho^\mu_t \,(Q^f_t(y)) \bigr\rvert \\
				& \leq  C_L  (1 \vee \lvert \log(r_N)\rvert) (1+ \lVert \rho^\mu_t \rVert_\infty)\, \bigl\lvert Q^\mu_t(x) - Q^f_t(y) \bigr\rvert 
			\end{split}
		\end{equation}

		\noindent Hence, we have
		\begin{equation}
			\frac{\dd}{\dd t} D^2(t) \leq J_1(t) + J_2(t)
		\end{equation}
		with
		\begin{multline}\label{DJ1}
				J_1(t) :=  2\, \int  \dd \pi_0(x,y) \Bigl((1 \vee \sqrt{\lvert\log(r_N)\rvert})\, \lvert Q^\mu_t(x)-Q^f_t(y) \rvert +  \lvert P^\mu_t(x)-P^f_t(y) \rvert \Bigr) \cdot \\
				\Bigl((1 \vee \sqrt{\lvert\log(r_N)\rvert}) \lvert P^\mu_t(x)-P^f_t(y) \rvert + C_L  (1 \vee \lvert \log(r_N)\rvert) (1+ \lVert \rho^\mu_t \rVert_\infty) \bigl\lvert Q^\mu_t(x) - Q^f_t(y) \bigr\rvert \Bigr) 
		\end{multline}
		\begin{equation}\begin{split}\label{DJ2}
				J_2(t) := 2\, \int \Bigl((1 \vee \sqrt{\lvert\log(r_N)\rvert})\, \lvert Q^\mu_t(x)-Q^f_t(y) \rvert +  \lvert P^\mu_t(x)-P^f_t(y) \rvert \Bigr) \cdot\\
				\bigl\lvert \tilde k*\tilde\rho^\mu_t \, (Q^f_t(y)) - \tilde k*\tilde \rho^f_t \,(Q^f_t(y)) \bigr\rvert   \,  \dd \pi_0(x,y)
			\end{split}\end{equation}
			
			\noindent Now we observe that
			\begin{equation} J_1(t) \leq  C_L  (1 \vee \lvert \log(r_N)\rvert) (1+ \lVert \rho^\mu_t \rVert_\infty) D^2(t),\end{equation}
			while for the second term, we find with H{\"o}lders inequality
			\begin{align}\notag
				J_2&(t) \leq \\\label{J21}
				2&\Bigl[\int \Bigl((1 \vee \sqrt{\lvert\log(r_N)\rvert})\, \lvert Q^\mu_t(x)-Q^f_t(y) \rvert +  \lvert P^\mu_t(x)-P^f_t(y) \rvert \Bigr)^2  \dd \pi_0(x,y)\Bigr]^{1/2}\\\label{J22}
				&\Bigl[\int\bigl\lvert \tilde k*\tilde\rho^\mu_t \, (Q^f_t(y)) - \tilde k*\tilde \rho^f_t \,(Q^f_t(y)) \bigr\rvert^2   \,  \dd \pi_0(x,y)\Bigr]^{1/2}.
			\end{align}
			We identify \eqref{J21} as $D(t)$, while for \eqref{J22} we get
			\begin{align}\notag 
				& \Bigl[\int \bigl\lvert \tilde k*(\tilde\rho^\mu_t - \tilde\rho^f_t \bigr) \bigl(Q^f_t(y) \bigr) \bigr\rvert^2\;\dd \pi_0(x,y)\Bigr]^{1/2}\\\notag
				 & =  \Bigl[\int \bigl\lvert \tilde k*(\tilde\rho^\mu_t - \tilde\rho^f_t \bigr) \bigl(Q_0(y)\bigr) \bigr\rvert^2\;\dd \pi_t(x,y)\Bigr]^{1/2}\\\notag
				& \leq \Bigl[\int  \bigl(\tilde k*\tilde \rho^\mu_t - \tilde k*\tilde \rho^f_t \bigr)^2 \,f(t,y)\,\dd^{2d} y )\Bigr]^{1/2}\\\notag
				& =  \Bigl[\int  \bigl(\tilde k*\tilde\rho^\mu_t - \tilde k*\tilde\rho^f_t \bigr)^2(q) \, \rho^f_t(q)\,\dd^d q )\Bigr]^{1/2}\\[1.2ex]\label{term2}
			& \leq \lVert \rho^f_t \rVert^{1/2}_\infty \lVert \tilde k*(\tilde \rho^\mu_t - \tilde \rho^f_t) \rVert_2 \leq C_0^{1/2}\lVert k*(\tilde \rho^\mu_t - \tilde \rho^f_t) \rVert_2
			\end{align}
			
			\noindent From Lemma \ref{Lemma:rhobound}, we know that as long as $D(t) \leq r_N^{1+\frac{d}{2}}$, i.e. $D^*(t) \leq 1$, the microscopic charge density is bounded as
			\begin{equation}\begin{split}   \lVert \rho^\mu_t \rVert_\infty \leq &\lvert B^d(2) \rvert \lVert\rho[{f}^N_t] \rVert_\infty +  r_N^{-(d+2)} \,D^2(t)\\[1.2ex]
					\leq & \lvert B^d(2) \rvert \sup_{N \in \IN} \lVert\rho[{f}^N_t] \rVert_\infty +  1\\
					\leq & \lvert B^d(2) \rvert C_0 + 1 =: C_\rho,
				\end{split}
			\end{equation} 
			Note that this bound holds independent of $N$. Hence, we can use Loeper's stability result, Proposition \ref{Prop:Loeper}, in \eqref{term2} and get:
			\begin{align}\label{L2forceestimate} \lVert k*(\tilde\rho^\mu_t - \tilde \rho^f_t )\rVert_2 \leq \bigl[\max\lbrace \lVert \tilde\rho_t^\mu \rVert_\infty, \rVert \tilde\rho_t^f \rVert_\infty \rbrace\bigr]^{\frac{1}{2}} \, W_2(\tilde \rho^\mu_t, \tilde\rho_t^f) \leq C_\rho^{\frac{1}{2}} D(t).\end{align}
		\noindent Putting everything together and setting $C_1:=2C_\rho C_L$, we have
			
			\begin{equation*} \frac{\dd}{\dd t} D^2(t) \leq 2\, C_1 (1 \vee \sqrt{\lvert\log(r_N)\rvert})\, D^2(t) \end{equation*}
			or, after dividing by $2D(t)$ and multiplying both sides by $r_N^{-(1+\frac{d}{2})}$,
			\begin{equation*}
				\frac{\dd}{\dd t} D^*(t) \leq C_1( 1 \vee \sqrt{\lvert\log(r_N)\rvert}) D^*(t).\end{equation*}
			
			\noindent By an application of Gronwall's Lemma, we conclude that:
			
			\begin{equation*} D^*(t) \leq D^*(0) \, e^{t\, C_1( \sqrt{\lvert\log(r_N)\rvert}+1)}.\end{equation*}
			
			\noindent Finally, taking on the right hand side the infimum over all $\pi_0 \in \Pi(\mu^N_0, f_0)$, $D^*(0)$ becomes $W^*(\mu^N_0[Z], f_0)$ and we get for all $t \in T$:
			
			\begin{equation} W^*(\mu_t^N, f^N_t) \leq  W^*(\mu^N_0,f_0)\, e^{t\, C_1 (\sqrt{\lvert\log(r_N)\rvert}+1)}.\end{equation}
			
			\noindent If there exists an $\epsilon > 0$ such that $\lim\limits_{N\to \infty} \frac{W_2(\mu^N_0,f_0)}{r_N^{1+d/2+\epsilon}} = 0$, the right hand side converges to $0$, so that, in particular, $\lim \limits_{N \to \infty} r_N^{1+ \frac{d}{2}} W_2(\mu^N_t,f^N_t) = 0.$ 
			
		\end{proof}

\noindent To show convergence to solutions of the (unregularized) Vlasov-Poisson equation, we also require the following:

\begin{Proposition}\label{Prop:fNtof2}
	Let $f_0$ satisfy the assumptions of Proposition \ref{Prop:Prop}. Let $f^N_t$ and $f_t$ be the solution of the regularized, respectively the proper Vlasov-Poisson equation with initial data $f_0$. Then:
	\begin{equation} W_2(f^N_t, f_t) \leq  r_N\, e^{tC_1(\sqrt{\lvert \log (r_N)\rvert}+1)}.\end{equation}
\end{Proposition}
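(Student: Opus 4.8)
The plan is to mimic almost verbatim the Gronwall argument used in the proof of Proposition~\ref{Prop:Prop}, but now comparing the characteristic flow $\varphi^{f^N}_{t,0}$ of the regularized Vlasov--Poisson equation with the characteristic flow $\psi_{t,0}$ of the (unregularized) Vlasov--Poisson equation, both started from the same initial datum $f_0$. First I would fix an optimal coupling $\pi_0 \in \Pi(f_0,f_0)$ (in fact, the diagonal coupling, since both flows start from $f_0$), push it forward by $(\varphi^{f^N}_{t,0},\psi_{t,0})$ to obtain $\pi_t \in \Pi(f^N_t,f_t)$, and define the quantity
\begin{equation*}
D(t) := \Bigl[\int \bigl((1\vee\sqrt{\lvert\log(r_N)\rvert})\,\lvert Q^{f^N}_t(x)-Q^{f}_t(y)\rvert + \lvert P^{f^N}_t(x)-P^{f}_t(y)\rvert\bigr)^2\,\dd\pi_0(x,y)\Bigr]^{1/2},
\end{equation*}
which dominates $W^N_2(f^N_t,f_t)$. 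The goal is to show $D(t)\le r_N^{1+d/2}e^{tC_1(\sqrt{\lvert\log r_N\rvert}+1)}$ and then drop the $\log$-weight to get the stated $W_2$ bound (note $W_2\le W_2^N$ and $r_N^{1+d/2}\le r_N$ only for small $r_N$ — in fact one uses $W_2\le W_2^N\le D$ directly, so the $r_N^{1+d/2}$ in Proposition~\ref{Prop:Prop} is not needed here; the claimed bound $W_2(f^N_t,f_t)\le r_N e^{\dots}$ should follow because the ``initial'' distance is not zero but is instead the $r_N$-discrepancy between the two force kernels integrated over a unit time, i.e. it accumulates from the forcing term, not from $D(0)$).

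The key computation is $\tfrac{\dd}{\dd t}D^2(t)$, which splits exactly as in Proposition~\ref{Prop:Prop} into a term coming from the Lipschitz bound on the force (handled by Lemma~\ref{Lemma:Lip-Log}~ii) applied to $\rho^{f^N}_t$, whose $L^\infty$-norm is bounded by $C_0$ via Assumption~\ref{GeneralAssumption}) and a ``force-difference'' term
\begin{equation*}
\bigl\lvert (\chi^N*k*\chi^N*\rho^{f^N}_t)(Q^{f}_t(y)) - (k*\rho^{f}_t)(Q^{f}_t(y))\bigr\rvert .
\end{equation*}
I would further split this into $\bigl\lvert \tilde k*(\tilde\rho^{f^N}_t - \tilde\rho^{f}_t)\bigr\rvert$, estimated by $C_0^{1/2}\,W_2(\tilde\rho^{f^N}_t,\tilde\rho^{f}_t)\le C_0^{1/2}D(t)$ via Proposition~\ref{Prop:Loeper} and Lemma~\ref{renormweaknorm}~ii), plus the genuinely new piece $\bigl\lvert (\chi^N*k*\chi^N*\rho^{f}_t)(q) - (k*\rho^{f}_t)(q)\bigr\rvert$, which measures how far the double-smeared kernel is from the true kernel and does \emph{not} vanish with $D$. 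This last term is the inhomogeneous source of the Gronwall inequality and must be bounded by something of order $r_N$ times a constant depending on $\lVert\rho^f_t\rVert_\infty$ and $\lVert\rho^f_t\rVert_1$; integrating $k*\rho = k*(\chi^N*\chi^N*\rho) + $ error, or equivalently writing $\tilde{\tilde\rho}-\rho$ and using $W_1(\tilde{\tilde\rho},\rho)\le 2r_N$ together with the Log-Lip modulus of continuity of $k*\rho$, gives a bound of the form $C r_N\lvert\log r_N\rvert$ — here some care with the logarithm is needed, but it is absorbed into the constant $C_1$ and the $(1\vee\sqrt{\lvert\log r_N\rvert})$ prefactor.

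Collecting terms, $\tfrac{\dd}{\dd t}D(t) \le C_1(1\vee\sqrt{\lvert\log r_N\rvert})\,D(t) + C(1\vee\sqrt{\lvert\log r_N\rvert})\,r_N$, and Gronwall's lemma with $D(0)=0$ yields $D(t)\le r_N\bigl(e^{tC_1(\sqrt{\lvert\log r_N\rvert}+1)}-1\bigr)\le r_N e^{tC_1(\sqrt{\lvert\log r_N\rvert}+1)}$, and since $W_2(f^N_t,f_t)\le W^N_2(f^N_t,f_t)\le D(t)$ the claim follows. The main obstacle I anticipate is controlling the force-difference term $\lVert k*\rho^f_t - \chi^N*k*\chi^N*\rho^f_t\rVert$ in $L^2(\rho^f_t\,\dd q)$: unlike in Proposition~\ref{Prop:Prop}, where both densities were smeared the same way and the discrepancy collapsed, here one compares a smeared with an unsmeared kernel, so one genuinely needs the Log-Lipschitz regularity of $k*\rho^f_t$ (guaranteed by $\rho^f_t\in L^\infty$ under Assumption~\ref{GeneralAssumption}) together with Lemma~\ref{renormweaknorm}~i) applied twice to bound $W_1(\chi^N*\chi^N*\rho^f_t,\rho^f_t)\le 2r_N$, and then convert this Wasserstein bound into an $L^2$-force bound. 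A clean way to do this conversion is again via Proposition~\ref{Prop:Loeper} (or a Loeper-type estimate): $\lVert k*\nu_1 - k*\nu_2\rVert_2 \lesssim \max\{\lVert\nu_1\rVert_\infty,\lVert\nu_2\rVert_\infty\}^{1/2} W_2(\nu_1,\nu_2)$ with $\nu_1=\tilde{\tilde\rho}^f_t$, $\nu_2=\rho^f_t$ and $W_2(\tilde{\tilde\rho}^f_t,\rho^f_t)\le 2r_N$, noting $\lVert\tilde{\tilde\rho}^f_t\rVert_\infty\le\lVert\rho^f_t\rVert_\infty$, which keeps everything uniform in $N$ and avoids the logarithm in this particular term altogether.
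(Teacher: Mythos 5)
Your proposal is correct and follows essentially the same route as the paper: the same pushed-forward diagonal coupling, the same Gronwall setup with the weighted metric, and the same key step of writing $\tilde k*\tilde\rho = k*(\chi^N*\chi^N*\rho)$ so that Lemma \ref{renormweaknorm} gives $W_2(\tilde{\tilde\rho}^N_t,\rho_t)\le D(t)+2r_N$ and Loeper's estimate turns this into the $L^2$ force bound, producing the inhomogeneous term $\sim r_N$ in the Gronwall inequality. The only cosmetic difference is that you apply the triangle inequality before Proposition \ref{Prop:Loeper} (splitting off $\tilde{\tilde\rho}^f_t-\rho^f_t$ separately) whereas the paper applies Loeper once to $\tilde{\tilde\rho}^N_t-\rho_t$; your closing remark that this avoids any Log-Lipschitz argument and any logarithm in the source term is exactly the paper's point.
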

\begin{proof} 
	Let $\rho^N_t:=\rho[f^N_t]$ and $\rho^\infty_t:=\rho[f_t]$ be the charge density induced by $f^N_t$ and $f_t$, respectively. Let $\varphi_t^N = (Q_t^N, P_t^N)$ the characteristic flow of $f^N_t$ and $\psi_t=(Q_t, P_t)$ the characteristic flow of $f_t$. We consider $\pi_0(x,y) := f_0(x)\delta(x-y) \in \Pi(f_0,f_0)$, which is already the optimal coupling yielding $W^N_2(f_t^N, f_t)\lvert_{t=0}=W^N_2(f_0,f_0) = 0$ and set $\pi_t = (\varphi^N_t, \psi_t)\# \pi_0 \in \Pi(f^N_t,f_t)$. As above, we define
	\begin{equation}
	D(t):=\Bigl[ \int\limits_{\IR^{6} \times \IR^{6}} \Bigl((1 \vee \sqrt{\lvert\log(r_N)\rvert})\, \lvert x^1-y^1 \rvert + \, \lvert x^2-y^2 \rvert \Bigr)^2 \; \dd \pi_t(x,y)\Bigr]^{1/2}
	\end{equation}
	
	\noindent and compute 
	\begin{align*}\notag
	&\frac{\dd}{\dd t} D^2(t) 
	\leq  2\, \int \Bigl((1 \vee \sqrt{\lvert\log(r_N)\rvert})\, \lvert Q^N(t,x)-Q(t,y) \rvert +  \lvert P^N(t,x)-P(t,y) \rvert \Bigr)\\
	&\Bigl((1 \vee \sqrt{\lvert\log(r_N)\rvert}) \, \lvert P^N(t,x)-P(t,y) \rvert +  \, \bigl\lvert \tilde k*\tilde\rho^N_t (Q^N(x)) -  k* \rho^f_t(Q_t(y)) \bigr\rvert\Bigr)\,  \dd \pi_0(x,y)
	\end{align*}
	
	\noindent The proof proceeds analogous to Proposition \ref{Prop:Prop}, simplified by the fact that the charge densities remain bounded by assumption. The only noteworthy difference is in eq. \eqref{L2forceestimate}. Observing that $\tilde k * \tilde \rho = k * \tilde{\tilde \rho} = k * (\chi^N*\chi^N*\rho)$, we use Lemma \ref{renormweaknorm} to conclude:
	\begin{equation} W_2(\tilde{\tilde{\rho}}_t^N, \rho_t) \leq W_2(\rho^N_t,\rho_t) + 2 r_N \leq W_2(f^N_t, f_t) + 2 r_N \leq D(t) + 2 r_N \end{equation}
	
	\noindent In total, we find:
	\begin{align*}\frac{\dd}{\dd t} D^2(t) \leq 2 C_0 C_L\,( 1 \vee \sqrt{\lvert \log{r_N}\rvert}) \, D^2(t) + 2C_0 D(t) ( D(t) +2 r_N) \end{align*}
	\noindent or
	\begin{equation*} 
	\frac{\dd}{\dd t} D(t) \leq  C_1( \sqrt{\lvert \log{r_N}\rvert} +1)\, D(t) + 2 C_0 r_N 
	\end{equation*}
	with $C_1 > 2C_0 C_L$ as defined in the previous proof. Using Gronwall's inequality and the fact that $D(0)=0$, we have
	\begin{equation*}W_2(f^N_t, f_t) \leq  D(t) \leq  r_N \,e^{tC_1( \sqrt{\lvert \log r_N \rvert} +1)}, \end{equation*}
	from which the desired statement follows. 
	
\end{proof}

\subsection{Typicality}

To complete the proof of Theorem \ref{Thm:Thm}, it remains to show that the assumptions of Proposition \ref{Prop:Prop} are satisfied for \textit{typical} initial conditions, i.e. with probability approaching one as $N$ tends to infinity. It is a classical result that if $Z_1, ..., Z_N$ are i.i.d. with law $f$, their empirical density $\mu^N[Z]=\frac{1}{N} \sum\limits_{i=1}^N \delta_{Z_i}$ goes to $f$ in probability. Establishing quantitative bounds on large deviations (concentration estimates) is, however, a longstanding problem in probability theory with a vast amount of literature. To our knowledge, one of the first paper to address this question in the context of Wasserstein metrics was Bolley, Guillin, Villani, 2007 \cite{BGV}. Subsequently, other authors have derived stronger concentration estimates, see, in particular, \cite{Boissard} and \cite{Dereich}. Very recently, great progress has been made in the paper of Fournier and Guillin, 2014, which considerably improves upon previous results, both in strength and generality \cite{Fournier}. We cite now their concentration estimates and apply them to conclude the proof of our main theorem. 

\begin{Theorem}[Fournier and Guillin]\label{Fournier}\mbox{}\\
	\noindent Let $f$ be a probability measure on $\IR^n$ such that $\exists k>2p$:
	\begin{equation*} 
		M_k(f):=	\int\limits_{\IR^k} \lvert x \rvert^k   \dd f(x) < + \infty.
	\end{equation*}
	Let $(Z_i)_{i=1,...,N}$ be a sample of independent variables, distributed according to the law $f$ and consider $\mu^N[Z]:= \sum\limits_{i=1}^N \delta_{Z_i}$. Then, for any $\epsilon > 0$ there exist constants $c,C$ depending only on $k, M_k(f)$ and $\epsilon$ such that for all $N \geq 1$ and $\xi >0$:
	\begin{equation*} 
		\IP_0\Bigl(W^p_p(\mu^N, f) > \xi \Bigr) \leq C N (N\xi)^{-\frac{k-\epsilon}{p}} +  C\mathds{1}_{\xi \leq 1}\, a(N,\xi)\end{equation*}
		with \begin{equation}	\label{aNxi}
	a(N,\xi):=	
		\begin{cases}\exp(-cN\xi^2) & \text{if } p > n/2\\
		 \exp(-cN(\frac{\xi}{\ln(2+1/\xi)})^2) &  \text{if } p = n/2\\
		\exp(-cN\xi^{k/p}) &\text{if } p \in [1,n/2). \end{cases} 
	\end{equation}
\end{Theorem}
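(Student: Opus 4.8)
The plan is to prove this by the multiscale ("dyadic layering") strategy that is essentially forced here: since $f$ is only assumed to have a finite moment (no log-Sobolev or Talagrand inequality is available), one cannot simply invoke a transport-inequality for Gaussian concentration. Instead I would (i) bound the Wasserstein cost by a weighted sum of $\ell^1$-discrepancies of $\mu^N$ against $f$ over nested dyadic partitions, (ii) control each discrepancy by a Bernstein-type estimate for binomial variables, and (iii) optimise the allocation of the deviation $\xi$ across scales. The deterministic ingredient is the following: on the cube $[-1,1]^n$ let $\mathcal P_\ell$ be its partition into $2^{n\ell}$ congruent subcubes of side $2^{1-\ell}$; transporting mass cube-by-cube from a common refinement upwards gives, for probability measures $\mu,\nu$ supported in $[-1,1]^n$,
\[ W_p^p(\mu,\nu)\ \le\ C_{n,p}\,\sum_{\ell\ge 0} 2^{-p\ell}\sum_{F\in\mathcal P_\ell}\bigl|\mu(F)-\nu(F)\bigr|. \]
To pass from the cube to $\IR^n$ I would use dyadic annuli $\mathcal B_0=B(0,1)$, $\mathcal B_m=B(0,2^m)\setminus B(0,2^{m-1})$: restrict and rescale $f$ and $\mu^N$ on each $\mathcal B_m$, apply the above on each (picking up a factor $2^{pm}$ from the rescaling), and add a remainder measuring mass transported between annuli. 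The moment hypothesis enters precisely here: $f(\IR^n\setminus B(0,R))\le M_k R^{-k}$ and $\IP(\exists i\le N:\ |Z_i|>R)\le NM_kR^{-k}$, so choosing $R=R(N,\xi)$ makes the "far mass" contribution to $W_p^p$ below $\xi$ except on an event of probability $\lesssim N(N\xi)^{-(k-\epsilon)/p}$ — this is the first term in the claimed bound, and the $\epsilon$-loss is absorbed in optimising this truncation.

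For the probabilistic core, fix a Borel set $F$: then $N\mu^N(F)\sim\mathrm{Binomial}(N,f(F))$, so $\mu^N(F)-f(F)$ is a centred average of i.i.d. $[0,1]$-variables and Bernstein's inequality gives $\IP(|\mu^N(F)-f(F)|>t)\le 2\exp\!\bigl(-cNt^2/(f(F)+t)\bigr)$. For the layer functional $S_\ell:=\sum_{F\in\mathcal P_\ell}|\mu^N(F)-f(F)|$, summing over the $2^{n\ell}$ cubes at scale $\ell$ — and grouping cubes by the dyadic size of $f(F)$, using $\sum_F f(F)\le1$ so that at most $2^j$ cubes have $f(F)\ge 2^{-j}$ — yields both the mean bound $\IE S_\ell\lesssim \min\{2,\ (2^{n\ell}/N)^{1/2}\}$, i.e. $\IE[2^{-p\ell}S_\ell]\lesssim 2^{(n/2-p)\ell}N^{-1/2}$, and a sub-exponential deviation estimate for $S_\ell$ about its mean with the correct variance and range parameters.

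To assemble: fix a cut-off scale $\ell_0$. For $\ell>\ell_0$ use the crude bound $S_\ell\le\mu^N([-1,1]^n)+f([-1,1]^n)\le 2$, so those layers contribute $\lesssim 2^{-p\ell_0}$ to $W_p^p$, which we force $\le\xi/2$ by taking $2^{\ell_0}\sim\xi^{-1/p}$. For $\ell\le\ell_0$, split the remaining budget as $\sum_{\ell\le\ell_0}\xi_\ell\le\xi/2$ and union-bound the events $\{2^{-p\ell}S_\ell>\xi_\ell\}$ via the layer deviation estimate, then optimise the $\xi_\ell$. The trichotomy in $a(N,\xi)$ is exactly the behaviour of $\sum_{\ell\le\ell_0}2^{(n/2-p)\ell}$: for $p>n/2$ it converges and the coarse scales dominate, giving a genuinely Gaussian tail $\exp(-cN\xi^2)$; for $p=n/2$ the summand is $\ell$-independent, so the sum produces a $\log(1/\xi)$ factor that must be carried inside the exponent, giving the $\xi/\ln(2+1/\xi)$ form; for $p<n/2$ the finest admissible scale $\ell_0$ dominates, where the per-cube expected counts $N2^{-n\ell_0}$ are $O(1)$ or smaller, the binomials are effectively Poisson, and the Chernoff bound degrades to the stretched-exponential $\exp(-cN\xi^{k/p})$ — which is also where the moment exponent $k$ re-enters, through the choice of $R$ (hence the number of occupied cubes) forced by the first term.

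The main obstacle is this final optimisation: the scale cut-off $\ell_0$, the annulus truncation radius $R$, and the per-scale budgets $\xi_\ell$ must be chosen as explicit functions of $(N,\xi)$ so that simultaneously (i) every geometric series closes with a constant depending only on $n,p$, (ii) the polynomial term $N(N\xi)^{-(k-\epsilon)/p}$ genuinely absorbs the arbitrarily small loss $\epsilon$ coming from the moment truncation, and (iii) in the regime $p<n/2$ the Chernoff exponent is tracked sharply enough to land on the power $\xi^{k/p}$ rather than something weaker; the bookkeeping to make $c,C$ depend only on $k,M_k(f),\epsilon$ (uniformly in $N$ and $\xi$) is the delicate part. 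Finally, the refinements noted in the Remark — a stretched-exponential or, for compact support, a purely exponential rate — follow from the identical scheme by replacing the polynomial tail bound $M_kR^{-k}$ with the corresponding exponential tail bound, which pushes the "far mass" contribution below the $a(N,\xi)$ term entirely.
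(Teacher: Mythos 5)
First, a point of order: the paper does not prove this statement at all. It is quoted from Fournier and Guillin \cite{Fournier} and used as a black box, so there is no internal proof to compare yours against; your proposal has to be judged as a reconstruction of the original argument. At that level, your architecture is the right one and is essentially theirs: the deterministic bound of $W_p^p$ by weighted discrepancies over nested dyadic partitions, the decomposition of $\IR^n$ into dyadic annuli with the moment condition paying for the far mass (which is indeed the source of the polynomial term $CN(N\xi)^{-(k-\epsilon)/p}$ and of the $\epsilon$-loss), and the optimisation over scales in which the behaviour of $\sum_{\ell\le\ell_0}2^{(n/2-p)\ell}$ produces the three regimes of $a(N,\xi)$.

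Two genuine gaps remain. (1) The probabilistic core is not a per-cube Bernstein bound: grouping cubes by the dyadic size of $f(F)$ does give the expectation bound $\IE S_\ell\lesssim\min\{2,(2^{n\ell}/N)^{1/2}\}$, but it does not give a deviation inequality for the sum $S_\ell=\sum_{F}\lvert\mu^N(F)-f(F)\rvert$ --- splitting the budget $\xi_\ell$ over the $2^{n\ell}$ cubes and union-bounding loses a factor of the number of cubes in the exponent. What is actually needed (and what Fournier--Guillin do) is to write $S_\ell=2\sup_{A}\bigl(\mu^N(A)-f(A)\bigr)$, the supremum running over all unions $A$ of cubes of $\mathcal P_\ell$, apply Bernstein to each of the $2^{\lvert\mathcal P_\ell\rvert}$ such sets, and absorb the entropy cost $2^{n\ell}\ln 2$ into the exponent $cN\xi_\ell^2$. (2) It is precisely this entropy-versus-exponent competition at the finest retained scale $2^{\ell_0}\sim\xi^{-1/p}$ that produces the stretched exponential in the regime $p<n/2$, and it yields the exponent $n/p$, not $k/p$; your attribution of that exponent to the moment truncation and the choice of $R$ is not how the mechanism works. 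In fact the $k/p$ in the quoted statement appears to be a misprint for $n/p$: the paper's own application of the theorem (with $n=2d$, $p=2$) uses $a(N,\xi)=\exp(-cN\xi^{d})=\exp(-cN\xi^{n/p})$ rather than $\exp(-cN\xi^{k/2})$. The moment order $k$ enters only through the polynomial term and through the constants, consistent with your (correct) treatment of the truncation radius.
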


\noindent We now apply this result to conclude the proof of our main theorem and establish an upper bound on $r_N$.
\begin{proof}[\textbf{\emph{Proof of Theorem} \ref{Thm:Thm}}]
Let $r_N \geq N^{-\delta}$ and $\epsilon > 0$. Let $\mathcal{A} \subseteq \IR^{2d}$ be the ($N$-dependent)  set defined by 
\begin{equation}Z \in \mathcal{A} \iff W_2(\mu^N_0[Z], f_0) > r_N^{1+\frac{d}{2} + \epsilon}.\end{equation}

\noindent We apply the previous in $n=2d$ dimensions with $\xi = N^{-\delta(2 + d + 2\epsilon)} \leq  r_N^{2(1+\frac{d}{2} + \epsilon)}$ and the finite moment assumption \eqref{finitemoment}. We find:
\begin{equation*} 
	\IP_0(\mathcal{A}) \leq C \Bigl(\exp(-cNN^{-\delta(2 + d + 2\epsilon)d})  +N^{1 -\frac{k - \epsilon}{2} (1 - \delta(2 + d + 2\epsilon))}\Bigr).\end{equation*}
Where the probability is defined with respect to $\otimes^N f_0$. Choosing 
\begin{equation} \delta = \frac{1 - \epsilon}{(2 + d + 2\epsilon)d} \end{equation} we have
\begin{equation*}
	\IP_0(\mathcal{A}) \leq C \bigl(\exp(-cN^\epsilon)  +N^{1 -\frac{k}{2}+\frac{k}{2d}}\bigr) \to 0 , \; N \to \infty.
\end{equation*}
\noindent For the typical initial conditions $Z \in \mathcal{A}^c$, it holds according to Proposition \ref{Prop:Prop} that for all $ t\leq T$,
\begin{align}\notag W^*(\mu^N_t, f^N_t) &\leq  W^*(\mu^N_0,f_0)\, e^{t\, C_1 (\sqrt{\lvert\log(r_N)\rvert}+1)}\\\notag
	&\leq (1 \vee \sqrt{\lvert \log(r_N)\rvert})\, r_N^{-(1+\frac{d}{2})}W_2(\mu^N_0,f_0)  \, e^{t\, C_1 (\sqrt{\lvert \log(r_N)\rvert}+1)} \\\label{besmallerone}
	& \leq  (1 \vee \sqrt{\lvert \log(r_N) \rvert})\,r_N^{\epsilon} \, e^{T\, C_1 (\sqrt{\lvert\log(r_N)\rvert}+1)}.\end{align}
Observing that $e^{\sqrt{\lvert \log r_N \rvert}} = \bigl(e^{- \log r_N})^{\frac{1}{ \sqrt{\lvert \log r_N \rvert }}} = (r_N)^{\frac{-1}{\sqrt{\lvert \log r_N \rvert}}}$, there exists $N_0 \in \IN$ such that  $\eqref{besmallerone} < 1$ for all $N \geq N_0$. More precisely, it suffices to choose $N_0$ large enough that $r_{N_0} < e^{-(\frac{2C_1T +1}{\epsilon})^2}$. Then we find:  
\begin{equation}W^*(\mu^N_t, f^N_t) < 1 \Rightarrow W_2(\mu^N_t, f^N_t) < r_N^{1+\frac{d}{2}} W^*(\mu^N_t, f^N_t) <r_N^{1+\frac{d}{2}}. \end{equation}

\noindent Now we recall from Proposition \ref{Prop:fNtof2}:
\begin{equation*} W_2(f^N_t, f_t) \leq r_N\, e^{tC_1(\sqrt{\lvert \log (r_N)\rvert}+1)}\end{equation*}
which is smaller than $\frac{1}{2} r_N^{1-\epsilon}$ for $N \geq N_0$. We conclude the proof by noting that
\begin{equation*}W_2(\mu^N_t[Z], f_t) \leq W_2(\mu^N_t[Z], f^N_t)  +  W_2( f^N_t, f_t) \leq   r_N^{1+\frac{d}{2}} + \frac{1}{2} r_N^{1-\epsilon} \leq  r_N^{1-\epsilon}, \end{equation*} 
for all $Z \in \mathcal{A}^c, N \geq N_0$ and $t \in [0,T]$. 

\end{proof}

\newpage
\bibliography{VPlit}

\begin{thebibliography}{10}

\bibitem{Peter}
N.~Boers and P.~Pickl.
\newblock On mean field limits for dynamical systems.
\newblock {\em Journal of Statistical Physics}, pages 1--16, 2015.

\bibitem{Boissard}
E.~Boissard.
\newblock Simple bounds for the convergence of empirical and occupation
  measures in 1-{W}asserstein distance.
\newblock {\em Electronic Journal of Probability}, 16:2296--2333, 2011.

\bibitem{BGV}
F.~Bolley, A.~Guillin, and C.~Villani.
\newblock Quantitative concentration inequalities for empirical measures on
  non-compact spaces.
\newblock {\em Probability Theory and Related Fields}, 137(3):541--593, 2007.

\bibitem{BraunHepp}
W.~Braun and K.~Hepp.
\newblock The {V}lasov dynamics and its fluctuations in the 1/{N} limit of
  interacting classical particles.
\newblock {\em Comm. Math. Phys.}, 56(2):101--113, 1977.

\bibitem{Dereich}
S.~Dereich, M.~Scheutzow, and R.~Schottstedt.
\newblock Constructive quantization: approximation by empirical measures.
\newblock In {\em Annales de l'Institut Henri Poincar{\'e}, Probabilit{\'e}s et
  Statistiques}, volume~49, pages 1183--1203. Institut Henri Poincar{\'e},
  2013.

\bibitem{Dobrushin}
R.~L. Dobrushin.
\newblock Vlasov equations.
\newblock {\em Functional Analysis and Its Applications}, 13(2):115--123, 1979.

\bibitem{Fournier}
N.~Fournier and A.~Guillin.
\newblock On the rate of convergence in wasserstein distance of the empirical
  measure.
\newblock {\em Probability Theory and Related Fields}, 162:1--32, 2014.

\bibitem{Golse}
F.~Golse.
\newblock The mean-field limit for a regularized {V}lasov-{M}axwell dynamics.
\newblock {\em Commun. Math. Phys.}, 310(3):789--816, 2012.

\bibitem{Grunbaum}
F.A. Gr{\"u}nbaum.
\newblock Propagation of chaos for the {B}oltzmann equation.
\newblock {\em Arch. Rational Mech. Anal.}, 42:323--345, 1971.

\bibitem{HaurayJabin}
M.~Hauray and P.-E. Jabin.
\newblock Particles approximations of {V}lasov equations with singular forces :
  Propagation of chaos.
\newblock To appear in Ann. Sci. {\'E}c. Norm. Sup{\'e}r, 2013.

\bibitem{Horst3}
E.~Horst.
\newblock Global strong solutions of {V}lasov's equation -- necessary and
  sufficient conditions for their existence.
\newblock {\em Banach Center Publications}, 19(1):143--153, 1987.

\bibitem{Horst}
E.~Horst.
\newblock On the asymptotic growth of the solutions of the {V}lasov-{P}oisson
  system.
\newblock {\em Math. Methods Appl. Sci.}, 16:75--85, 1993.

\bibitem{Kac}
M.~Kac.
\newblock Foundations of kinetic theory.
\newblock In {\em Proceedings of the Third Berkeley Symposium on Mathematical
  Statistics and Probability, 1954-1955}, volume vol III, pages 171--197.
  University of California Press, 1956.

\bibitem{PeterDustin}
D.~Lazarovici and P.~Pickl.
\newblock A mean-field limit for the {V}lasov-{P}oisson system.
\newblock Preprint: arXiv:1502.04608, 2015.

\bibitem{LionsPerthame}
P.-L. Lions and B.~Perthame.
\newblock Propagation of moments and regularity for the 3-dimensional
  {V}lasov-{P}oisson system.
\newblock {\em Inventiones Mathematicae}, 105:415--430, 1991.

\bibitem{Loeper}
G.~Loeper.
\newblock Uniqueness of the solution to the {V}lasov-{P}oisson system with
  bounded density.
\newblock {\em J. Math. Pures Appl.}, 86:68--79, 2006.

\bibitem{Mischler}
S.~Mischler and M.~Hauray.
\newblock On {K}ac's chaos and related problems.
\newblock {\em Journal of Functional Analysis}, 266(10):6055--6157, 2014.

\bibitem{Neunzert}
H.~Neunzert.
\newblock An introduction to the nonlinear {B}oltzmann-{V}lasov equation.
\newblock In Carlo Cercignani, editor, {\em Kinetic Theories and the Boltzmann
  Equation}, volume 1048 of {\em Lecture Notes in Mathematics}, pages 60--110.
  Springer Berlin Heidelberg, 1984.

\bibitem{NeunzertWick}
H.~Neunzert and J.~Wick.
\newblock Die {A}pproximation der {L}\"{o}sung von
  {I}ntegro-{D}ifferentialgleichungen durch endliche {P}unktmengen.
\newblock In R.~Ansorge and W.~T\"{o}rnig, editors, {\em Numerische Behandlung
  nichtlinearer {I}ntegrodifferential - und {D}ifferentialgleichungen}, volume
  395 of {\em Lecture Notes in Mathematics}, pages 275--290. Springer, Berlin,
  Heidelberg, 1974.

\bibitem{Pfaffelmoser}
K.~Pfaffelmoser.
\newblock Global classical solutions of the {V}lasov-{P}oisson system in three
  dimensions for general initial data.
\newblock {\em J. Differential Equations}, 95(2):281--303, 1990.

\bibitem{Rein2}
G.~Rein.
\newblock Global weak solutions of the relativistic {V}lasov-{M}axwell system
  revisited.
\newblock {\em Comm. in Math. Sci.}, 2:145--158, 2004.

\bibitem{Schaeffer}
J.~Schaeffer.
\newblock Global existence of smooth solutions to the {V}lasov-{P}oisson system
  in three dimensions.
\newblock {\em Comm. Partial Differential Equations}, 16(8-9):1313--1335, 1991.

\bibitem{Spohn}
H.~Spohn.
\newblock {\em Large scale dynamics of interacting particles}.
\newblock Springer Berlin Heidelberg, 1991.

\bibitem{SpohnBook}
H.~Spohn.
\newblock {\em Dynamics of Charged Particles and their Radiation Field}.
\newblock Cambridge University Press, 2004.

\bibitem{Sznitman}
A.-S. Sznitman.
\newblock Topics in propagation of chaos.
\newblock In {\em {\'E}cole d'{\'E}t{\'e} de Probabilit{\'e}s de Saint-Flour
  XIX -- 1989}, volume 1464 of {\em Lexture Notes in Mathematics}, pages
  165--251. Springer, Berlin, 1991.

\bibitem{Ukai}
S.~Ukai and T.~Okabe.
\newblock On classical solutions in the large in time of two-dimensional
  {V}lasov's equation.
\newblock {\em Osaka J. Math.}, 15(2):245--261, 1978.

\bibitem{Villani}
C.~Villani.
\newblock {\em Optimal Transport Old and New}, volume 338 of {\em Grundlehren
  der mathematischen Wissenschaften}.
\newblock Springer, Berlin, 2009.

\bibitem{Wollmann}
S.~Wollman.
\newblock Global-in-time solutions of the two-dimensional {V}lasov-{P}oisson
  systems.
\newblock {\em Communications on Pure and Applied Mathematics}, 33(2):173--197,
  1980.

\end{thebibliography}
\bibliographystyle{plain}

\end{document}